\newcommand{\E}[1]{{\mathbb E}\left[ #1 \right]}
\DeclareMathOperator{\Tr}{Tr}
\newtheorem{Theorem}{Theorem}
\pgfplotsset{
	grid style = {
		dash pattern = on 0.025mm off 0.95mm on 0.025mm off 0mm, 
		line cap = round,
		black,
		line width = 0.5pt
	},
	tick label style={font=\small},
	label style={font=\small},
	legend style={font=\footnotesize},
}
\newtheorem{Lemma}{Lemma}
\begin{document}
	%
	% paper title
	% can use linebreaks \\ within to get better formatting as desired
	% Do not put math or special symbols in the title.
	\title{A Spatial  Basis Coverage Approach For Uplink Training And Scheduling In Massive MIMO Systems \thanks{This research has been supported by  the European project One5G.}}
	% author names and affiliations
	% use a multiple column layout for up to three different
	% affiliations
	
	\author{\IEEEauthorblockN{$\text{Salah Eddine Hajri}^* \text{and Mohamad Assaad}^*$}\\
		\IEEEauthorblockA{*TCL chair on 5G, Laboratoire des Signaux et Systemes (L2S, CNRS), CentraleSupelec,	91190	Gif-sur-Yvette, France\\
			Email: \{Salaheddine.hajri,\; Mohamad.Assaad\}@centralesupelec.fr}
		
	}
	% conference papers do not typically use \thanks and this command
	% is locked out in conference mode. If really needed, such as for
	% the acknowledgment of grants, issue a \IEEEoverridecommandlockouts
	% after \documentclass
	
	% for over three affiliations, or if they all won't fit within the width
	% of the page, use this alternative format:
	% 
	%\author{\IEEEauthorblockN{Michael Shell\IEEEauthorrefmark{1},
	%Homer Simpson\IEEEauthorrefmark{2},
	%James Kirk\IEEEauthorrefmark{3}, 
	%Montgomery Scott\IEEEauthorrefmark{3} and
	%Eldon Tyrell\IEEEauthorrefmark{4}}
	%\IEEEauthorblockA{\IEEEauthorrefmark{1}School of Electrical and Computer Engineering\\
	%Georgia Institute of Technology,
	%Atlanta, Georgia 30332--0250\\ Email: see http://www.michaelshell.org/contact.html}
	%\IEEEauthorblockA{\IEEEauthorrefmark{2}Twentieth Century Fox, Springfield, USA\\
	%Email: homer@thesimpsons.com}
	%\IEEEauthorblockA{\IEEEauthorrefmark{3}Starfleet Academy, San Francisco, California 96678-2391\\
	%Telephone: (800) 555--1212, Fax: (888) 555--1212}
	%\IEEEauthorblockA{\IEEEauthorrefmark{4}Tyrell Inc., 123 Replicant Street, Los Angeles, California 90210--4321}}

	% use for special paper notices
	%\IEEEspecialpapernotice{(Invited Paper)}
	
	% make the title area
	\maketitle
	
	\begin{abstract}
Massive multiple-input multiple-output (massive MIMO) can provide large spectral and energy efficiency gains. Nevertheless, its potential is conditioned on acquiring accurate channel  state information (CSI). In time division duplexing (TDD) systems, CSI is obtained through uplink training which is hindered by pilot contamination. The impact of this phenomenon can be relieved using spatial division multiplexing, which refers to partitioning users based on their spatial information and processing their signals accordingly. The performance of such schemes depend primarily on the implemented grouping method. In this paper, we propose a novel spatial  grouping scheme that aims at managing  pilot contamination while reducing the required training overhead in  TDD massive MIMO. Herein,  user specific decoding matrices are derived based on the  columns of the discrete Fourier transform matrix (DFT),  taken as a spatial basis. Copilot user groups are then formed in order to obtain the best coverage of  the spatial basis with minimum overlapping between  decoding matrices. We provide two algorithms that achieve the desired grouping and derive their  respective performance  guarantees. We also address  inter-cell  copilot interference through efficient  pilot sequence allocation,  leveraging the  formed copilot  groups.  Various numerical results are provided to showcase the efficiency  of the proposed  algorithms.
	\end{abstract}
	
	% no keywords

	% For peer review papers, you can put extra information on th e cover
	% page as needed:
	% \ifCLASSOPTIONpeerreview
	% \begin{center} \bfseries EDICS Category: 3-BBND \end{center}
	% \fi
	%
	% For peerreview papers, this IEEEtran command inserts a page break and
	% creates the second title. It will be ignored for other modes.
	\IEEEpeerreviewmaketitle
	%\footnote{The research of M. Assaad has been partially funded by Huawei and "Fondation Supelec"}
	\pagebreak
	
	\section{Introduction}

	Wireless networks are under the strain of  exponentially increasing  demand for higher data  rates. In order to  provide the  required performance  leap,   several new physical  layer technologies can be relied upon. One of the most promising technologies is, without doubt,   massive MIMO \cite{Noncooperative}. By  leveraging a large number of  antennas at the BSs,  massive MIMO  is able to  provide considerable improvement  in the   network's spectral and energy  efficiencies with simplified   transceiver design   \cite{mMIMO_antennas},\cite{EE_marzetta}. This promoted massive MIMO to   be  a key enabler of  future generation networks \cite{10Myth}. 
	
		However, massive MIMO gains depend heavily on acquiring accurate  CSI estimates at the BSs. In TDD systems, CSI estimation is performed through uplink (UL) training, leveraging channel reciprocity \cite{pilot_reduction1}.  Owing to the limited coherence interval, the training resources  are restricted and the same  pilot sequences  need to  be  reused resulting in pilot contamination \cite{pilot_reduction1}.  Addressing this issue has lead to the development of numerous CSI estimation methods  that exploit different channel statistics in order to mitigate copilot interference and enhance CSI accuracy.  We refer to the survey \cite{pilot_contamination_survey} for more information. Several of these methods leverage spatial division multiplexing (SDM).
	 
				In \cite{coordinated}, the authors proposed a coordinated approach to channel estimation. They proved that exploiting covariance information, under certain conditions, can lead to a complete removal of pilot contamination  when the number of antennas grow very large. The authors showed that the channel estimation performance is a function of the degree to which the eigenspaces of the desired and interference signals overlap with each other.	In \cite{caire_oppor}, Joint spatial division and multiplexing (JSDM) for	multi-user MIMO downlink (DL) was investigated. JSDM is a scheme that aims to serve users by clustering them such that users within a group have approximately similar channel covariances, while users across groups have near orthogonal covariance eigenspaces.  JSDM  has been  designed, originally, for FDD MIMO systems without considering  inter-cell interference.  In \cite{metis}, an adaptation  of JSDM to  the TDD case was proposed, taking into consideration  inter-cell interference.  In \cite{metis}, density-based  spatial  clustering  of  applications with  noise  (DBSCAN)  algorithm was used instead of   $K$-mean in order to  avoid the otherwise additional requirement of estimating the number of eigenspace-based clusters $K$. In  \cite{user_clustering_conf}, user grouping based on  channel direction was proposed. The authors proposed to  schedule  copilot  user such that  their channels are semi-orthogonal. This  results in enhancing CSI accuracy  which, consequently, improve  the  achievable spectral  efficiency (SE). In \cite{unified}, a unified scheme  for TDD/FDD massive MIMO systems was proposed. Based on a spatial basis expansion model (SBEM), CSI estimation overhead was reduced and pilot contamination relieved. Each user is associated with a spatial  signature that groups the index set of nonzero DFT points of its channel. Users are then  grouped such that Users in the same group have non-overlapping spatial signatures.  Each group is then allocated a pilot sequence and  signals are discriminated based on the different spatial signatures.

	    Previous works on  spatial multiplexing proposed to  group users based on their  channel covariance  eigenspace \cite{tot,caire_oppor}, when the  channel covariances are known, or simply based on their signals mean direction of arrivals \cite{unified}. The performance of  such  methods is determined by the user grouping scheme. In \cite{caire_oppor}, a chordal distance based $K$-mean clustering was proposed for FDD systems with  JSDM. In \cite{tot}, the authors investigated a wide range of  similarity measures such  as 	weighted likelihood, subspace	projection and Fubini-Study based similarity measures. The authors introduced two clustering methods namely, hierarchical and $K$-medoids clustering for user grouping with the aforementioned proximity measures. A comparison of the proposed grouping methods was performed and the combination that achieves the largest capacity was derived. In \cite{unified}, a greedy user scheduling algorithm was implemented in order to partition users into copilot groups based on their spatial signatures. Although a considerable increase in SE was recorded, these methods come with a number of shortcomings. In fact, in order to achieve a good user clustering,  the $K$-medoids and $K$-mean clustering methods require a prior  estimation of the  parameter $K$.  In addition, these methods use an averaging in order to derive the group specific eigenspace matrix which can lead to a  substantial overlapping between  clusters. Practically, users might have similar but not necessarily identical second order  channel statistics. This dictates  the need to  consider individual  user spatial information. In addition, the efficiency of leveraging the totality of the available 	degrees of freedom (DoF) should be more emphasized. 
%	    In fact, the aforementioned clustering methods concentrate on the distance between the eigenspace of each individual  user channel and the average eigenspace of each cluster without trying to cover all the possible signal space. 	Similar problems arise with the proposed   hierarchical clustering based scheme in  \cite{tot}, in addition to the computational complexity. 
%			 

				In this work, we propose a novel spatial user grouping scheme.  We consider a multi-cell TDD massive MIMO system, in which, spatial diversity is exploited in order to allow for a more pilot reuse, within each cell, while mitigating copilot interference.  This allows to increase in the number of scheduled users for the same training overhead while improving SE. Since a  multi-cell  system is considered, both intra and inter-cell  pilot contamination are tackled. We  choose to  decouple these two  problems  and  address them successively.
				In order to  deal with  intra-cell interference,  we propose a spatial grouping and scheduling scheme. We construct copilot groups based on the users spatial signatures. In each cell, any given  copilot group is formed such that it contains users with minimum overlapping in their signals spatial signatures and that provide a maximum  coverage of the signal spatial basis. The proposed approach is referred to  as  \textit{Spatial basis coverage  copilot user selection.} The idea is to  associate each user with a set of beams that concentrate a large amount of its channel power. Since uniform linear arrays (ULAs) are considered, the columns of  a unitary DFT matrix are used as spatial basis \cite{caire_oppor},\cite{unified}. After obtaining the  users specific decoding matrices, the BSs derive  copilot groups.  Each group  provides a maximum coverage of  all available independent DFT beams with  minimum  overlapping between users specific beam  matrices.   This approach enables also to couple the problems of user grouping and scheduling which reduces the complexity  of the network management. We provide two formulation of the copilot grouping problem and we propose two grouping  algorithms accordingly. First, the problem  of  copilot user selection is  formulated as a \textit{maximum coverage problem} \cite{MCP}. This  formulation  enables to  derive a low complexity   algorithm that  provides at least an $(1-(\frac{\tau-1}{\tau}) ^{\tau} )(1-\frac{1}{e})$-approximation of the optimal grouping. In the second case,  the problem  of copilot  group  generation is formulated as a \textit{Generalized maximum coverage problem} \cite{GMC}.   We propose a  low complexity  algorithm that  provides at least an $ (1-(\frac{\tau-1}{\tau}) ^{\tau} )\frac{\frac{3}{2}- \frac{e^{-2}}{2}}{1- e^{-2}}$-approximation  of the optimal user grouping.
				
				 	Based on the  constructed copilot groups, we address  inter-cell copilot interference through an  efficient cross-cell pilot  allocation. We propose a graphical framework based on the copilot groups spatial signature.  Using this information,	the network is able to allocate specific UL training sequences to copilot  groups, such that the previously defined spatial receivers can manage cross-cell copilot interference. The resulting pilot allocation problem is formulated as a \textit{max-$\tau$-cut  problem} \cite{maxk}, which enables to use a low complexity algorithm that provides a $(1- \frac{1}{\tau})$- approximation of the optimal solution.\\

	\textit{\textbf{Notations:} } We use  boldface small letters $(\textbf{a})$ for vectors,  boldface capital letters $(\textbf{A})$ for matrices. The notations $\textbf{A}^{\dagger}$ and $\Tr(\textbf{A} ) $  are used for Hermitian transpose and trace of matrix $\textbf{A}$, respectively .  $\left\| \textbf{a} \right\| $ denotes the euclidean norm of vector $\textbf{a}$. $\textbf{I}_n$ is used to denote the $n \times n$ identity  matrix and $\lVert A\rVert_F  ^2 = \Tr(\textbf{A}^{\dagger} \textbf{A}) $ denotes the  square of the Frobenius norm of $\textbf{A} $.

	\section{System Model And Preliminaries}\label{spatial_multiplexing:sec:sysmodel}	
	
	We consider a multi-cell, multi-user massive MIMO network  operating in TDD mode. The system is  composed of  $N_c$ cells containing, each,  a {BS} that is equipped with a large $M$-element ULA. 
	Each  BS is serving $K$ single omni-directional antenna users such that $K>>M$. Users are randomly  distributed in each cell. 	As introduced above, in this paper we focus on a TDD massive MIMO system, where the entire frequency band is used for DL and UL transmission by all BSs and users.  Since a TDD system is  considered,  the focus will be on addressing the UL training bottleneck. Considering flat fading channels, the channel vector between  user $i$ in cell $b$ and the BS of the  $r^{th}$ cell, $\textbf{g}^{[r]}_{ ib }$ is composed of an arbitrary number of i.i.d.  $P$ rays ($P>>1$)\cite{coordinated}. Hence, the  UL channel $\textbf{g}^{[r]}_{ ib }$ is given by the  following multi-path model
	
	\begin{align}\label{eq:channel_decomp_spatial}
	& \textbf{g}^{[r]}_{ ib } =   \frac{1}{\sqrt{P}} \sum\limits_{p=1}^{P} \textbf{a}(\theta^{[r,p]}_{ib}) \gamma^{[r,p]}_{ib}, 
	\end{align}

	Here, $\gamma^{[r,p]}_{ib}$ represents the  complex gain of  the  $p^{th}$ ray from user $i$ in cell $b$ and the BS of the  $r^{th}$ cell and follows  a $ \mathcal{C}\mathcal{N} \left( 0, \mu^{[r]^2}_{ib} \right)$ distribution where $\mu^{[r]}_{ib} $ denotes the  average  attenuation of  the channel.  $\theta^{[r,p]}_{ib}$ denotes the  direction of arrival (DOA)  of the   $p^{th}$ ray from user $i$ in cell $b$ and the  BS of the  $r^{th}$ cell. Moreover,  $ \textbf{a}(\theta^{[r,p]}_{ib}) \in \mathbb{C}^{M\times1 }$ is the array  manifold vector which is given  by:
	
	\begin{align}\label{eq:manifold}
	& \textbf{a}(\theta^{[r,p]}_{ib})= \big[1, e^{\frac{j 2\pi d }{\lambda}sin(\theta^{[r,p]}_{ib})},\ldots, e^{\frac{j 2\pi d }{\lambda}sin(\theta^{[r,p]}_{ib})(M-1)} \big], 
	\end{align}
	where $\lambda$ denotes the  signal wavelength,  $d$ refers to  the antenna spacing such that $d \leq \frac{\lambda}{2}$.  As in  \cite{coordinated} and  \cite{unified}, the  incident  angles of  each user, with mean DOA $\theta^{[r]}_{ib}$, are  considered to  be restrained in  a narrow  angular range $\big[\theta^{[r]}_{ib}-\Delta\theta^{[r]}_{ib} ,\theta^{[r]}_{ib}+\Delta\theta^{[r]}_{ib}   \big]$. Within this  range    $\textbf{a}(\theta^{[r,p]}_{ib}),p=1,\ldots,P, \forall i,b,r $ are mutually correlated. Consequently, the covariance matrix of each  channel $\textbf{g}^{[r]}_{ ib }$, which is given by  $\textbf{R}^{[r]}_{ ib }= \E{\textbf{g}^{[r]}_{ ib } \textbf{g}^{[r]^\dagger}_{ ib } }$,  possesses a  low-rank property.

%	\begin{figure}
%		\centering
%		\includegraphics[width=15cm,height=6cm]{figures-spatial/set_packin_system_modelRFRF.pdf}
	%	\caption{ {System Model}\label{spa:fig:sysmodel}}
%	\end{figure}

	\iffalse

	where $\textbf{h}^{[j]}_{ ib }\sim \mathcal{C}\mathcal{N}\left(0,\textbf{I }_M\right)$ is the small scale fading vector.  $\Lambda_{ib}^{[j]} $ is an  ${\zeta}^{[j]}_{ ib } \times {\zeta}^{[j]}_{ ib }$ diagonal matrix containing the  non-zero eigenvalues of $\textbf{R}^{[j]}_{ ib }$ and  $\textbf{U}^{[j]}_{ ib } \in \mathbb{C}^{M\times{\zeta}^{[j]}_{ ib }  } $ is a  matrix formed by the  eigenvectors associated with the  non-zero eigenvalues in $\Lambda_{ib}^{[j]}$.
	
	Antenna correlation is given by the one-ring model \cite{caire_oppor}, where a user located at a distance $r$, from its serving BS, and at an azimuth angle $\theta$ is surrounded by a scattering ring  of a given radius $s$ with an angular spread $\Delta$ that can be approximated  by $\Delta \approx \text{arctan}(\frac{s}{r}) $. Consequently,  the $(m,n)^{th}$ entry of the channel covariance matrix $\textbf{R}$ is given by:
	
	\begin{align}\label{eq:cov}
	& [\textbf{R}]_{m,n} = \frac{1}{2 \Delta} \int_{-\Delta}^{\Delta} e^{j  \kappa^\dagger  (\theta+\iota)( u_m-u_n)} d\iota, 
	\end{align}
	where $\kappa  (\iota)= -\frac{2\pi}{\lambda} (\text{cos}(\iota), \text{sin}(\iota))^\dagger$ is the wave vector for a planar wave impinging with AoA $\iota$, $\lambda$ is the  carrier wavelength, $u_m$ and $u_n$ denote the positions of antennas $m$ and $n$, respectively, in a 2-D coordinate system.\\
	\fi

	The BSs acquire CSI  estimates  using orthonormal training sequences (i.e., pilot sequences) in the UL.  For that,  we consider a set of orthonormal training sequences, that is, sequences $q_i\in \mathbb C^{\tau \times 1}$ such that $ q^{\dagger}_i q_j =\delta_{ij} $ (with $\delta_{ij}$ the Kronecker delta). In this paper, we consider an aggressive pilot reuse approach. In fact, in  addition to  reusing the  same set of orthogonal pilot sequences in every cell, we consider that the same sequences are reused even within  one  cell. Consequently,	 the channel estimates are corrupted by	both inter and intra-cell  pilot contamination.
	%	A depiction of the systems' layout is given in	FIGURE 1. 
	%	\begin{figure}[h!]
	%		\centering
	%%		\includegraphics[width=10cm,height=5cm]{set_packin_system_model}
	%		\label{System Model}
	%		\caption{System model }
	%	\end{figure}

	\section{Spatial division multiplexing based user scheduling}
	
	\subsection{Spatial basis in massive ULAs}

	Massive MIMO systems  provide a substantial  SE  gain by spatially multiplexing a large number of  mobile devices. This greater number of served devices requires higher signaling or feedback overhead in order to  obtain  CSI estimates.  This issue  promoted many  research work which  resulted in designing  new transmission  strategies for massive MIMO  leveraging low-rank  approximation  of  the  channel  covariance matrix \cite{coordinated,tot,caire_oppor}.  Indeed, based on the fact that  the  incident  signals  at the BSs are characterized by narrow angular spread,  it  was proven  that  the  effective channel  dimension can be reduced with negligible  capacity loss.
			Exploiting the channel low-rank property  in  massive MIMO transmission strategies proved to provide non-negligible gains in performance for both TDD and FDD systems \cite{unified,caire_oppor,fdd_conf}. In FDD mode, spatial division multiplexing allows to reduce the CSI feedback overhead while incurring no capacity loss  \cite{caire_oppor}.  In TDD, such methods enable to reduce training overhead while mitigating the impact of pilot contamination \cite{unified}.   These gains are mainly due to the capacity of spatial division methods to  utilizes the independent spatial  spaces of different users in order to discriminate between their signals. Spatial division based methods rely on  efficient spatial  information-based user grouping.  Several works proposed to  perform  this grouping using the  $K$-mean  algorithm  with different proximity  measures \cite{caire_oppor,tot}. $K$-medoids and hierarchical clustering have also been proposed \cite{tot}. A DFT-based greedy user grouping was also considered in  SBEM  \cite{unified}.

	In this  paper,  we provide  an  alternative  low-rank CSI estimation approach, leveraging  the characteristics of  ULAs. Indeed,  in this case, it was proven that a unitary DFT matrix constitute a good spatial basis  for  the signal \cite{unified , caire_large}. This means that, without accurate estimates of the  channel   covariance matrices,   spatial  division multiplexing  can be implemented using a unitary DFT matrix. For each user,  it  is  sufficient  to  derive decoding matrices based on  the   DFT matrix vectors that span the majority of its channel power \cite{unified}. The proposed spatial grouping approach  differs from previously propose ones,  \cite{caire_oppor,tot,unified},  as it   takes into consideration the  efficiency  of  the  spatial  space  coverage. The  main idea is to  group users according to their spatial  signatures  and  allocate pilot sequence  such that  copilot users are spatially separated (i.e., their spatial  signatures span independent subspaces). Copilot user grouping is  performed  based on two criterion. First, users are grouped such that  their spatial  signatures' overlap is minimal. Second, the  users in each group  provide  maximum  coverage  of all  DFT vectors. It is  worth  mentioning that  the  latter criterion was arguably under-leveraged in previous work.

	\subsection{Obtaining spatial information}
	
	Low-rank CSI estimation methods rely on different  knowledge levels of the channel statistics. In our case,  the proposed method requires the knowledge of each user spatial  signature (i.e. the main DoAs of each user signal).  
   This information needs to be estimated. Owing to the slow varying nature of spatial information,  an efficient estimation can be achieved with low signaling overhead. Indeed, the spatial signatures can be obtained using an UL preamble or DL training followed by feedback. We shed some light on how spatial signatures are obtained.

     UL preamble\cite{unified} can be used in order to  obtain the users spatial signatures. In this case, we need to  include an additional UL spatial training period. This should require a minimum  of $K$ training resource elements. During the UL preamble, the BSs receive the spatial training signal and derives the DFT-based decoding matrix for each user. In this work we choose to include a design parameter   $0<\alpha<1$ in order to quantify the importance of each DoA for each channel. 
	Particularly, for each  user   $i,b$   the   DFT-based decoding matrix $\textbf{F}^{[b]}_{ib}$ is obtained as follows
\begin{align}\label{eq:projection_matrix}
&   \textbf{F}^{[b]}_{ib}= \{ \textbf{f}_{s} \in \textbf{F},     \frac{\left\|  \textbf{g}^{[b]^\dagger}_{ ib} \textbf{f}_{s}\right\|^2 }{{Tr(\textbf{R}^{[b]}_{ib})}} \geq \alpha \},
\end{align}	

	$\textbf{F}^{[b]}_{ib}$ will be used as the bases in which the  user's signal is detected and will henceforth be refereed to as spatial  signature of  user $ib$.
	
	The spatial signature of  each user can be also  inferred from DL  spatial training followed by feedback \cite{user_clustering_conf}. In this case, users quantify the power of their channel along each direction and  feedback the indexes of the chosen 	DFT vectors according to~\eqref{eq:projection_matrix}. 
	The  spatial  signature of  each user forms a subspace that concentrate a large percentage  of  its channel power. Consequently,  allocating the  same  copilot  sequence  to users with  minimum overlapping in their spatial  signature (\ref{eq:projection_matrix}) enables to  discriminate between their  signals since  the power of their channels is  concentrated in  different  subspaces. 		
	In order to gain more insight  on the needed criterion to achieve efficient  spatial  grouping,  we analyze the  power of  the desired and interference signals, when the considered  DFT-based decoding matrices are used.

	\subsection{Achievable Performance With Adaptive Spatial Division Based User Scheduling}

For every user $i,b$, we consider the matrix $\textbf{F}_{ib}$  formed by the  vectors of the  DFT matrix according to~\eqref{eq:projection_matrix}. During the UL training period, active users send their pilot sequences. For analytical simplicity, we consider that  UL training sequences have the same reuse factor within all cells. We consider that the  network schedules $N_{p}$ users to use any given pilot sequences in each cell. During UL training, the  received pilot signal $\textbf{Y}_{p}^{[b]}$ at BS  $b$ is given by:
	\begin{align}\label{eq:uplink_training}
	& \textbf{Y}_{p}^{[b]}  =  \sqrt{\rho_p}\sum_{r=1}^{N_c} \sum_{l=1}^{\tau}   \sum_{i \in \Sigma (l,r)}^{}     \textbf{ g}^{[b]}_{ir}  \; {\textbf{q}_{l}}^\dagger  + \textbf{W}_p,
	\end{align}
	where $\textbf{W}_p \in \mathbb{C}^{ M \times \tau} $ refers to  an additive  white  Gaussian  noise matrix with i.i.d. $\mathcal{C}\mathcal{N}(0,1)$ entries and $\rho_p$ denote the  pilot transmit power.   $\Sigma (l,r)$ denotes the  set of user in cell $r$ that are using pilot sequence $l$ during UL training.
	The  $b^{th}$ BS  then uses the  orthogonality of  the training sequences in order to  obtain  the  Least square (LS) estimate of the  channel of user $i,b$.  The    BS  estimates the  channel  of each user $i,b$ after projecting the received signal on   $\textbf{F}^{[b]}_{ib}$ as
	\begin{align}\label{eq:channel_estimation}
	& \hat{\textbf{g}}^{[b,ib]}_{ib}= \textbf{F}^{[b]^\dagger}_{ib}   ( \frac{\textbf{Y}_{p}^{[b]}{\textbf{q}_{\chi(i,b)}}}{\sqrt{\rho_p}}),
	\end{align}
%	\textbf{R}^{[b]}_{ib}   \left( \frac{1}{{\rho_p}}\textbf{I}_M+  \sum_{r=1}^{N_c}    \sum_{u \in \Sigma (\chi(i,b),r)}^{}    \textbf{R}^{[b]}_{ur}   \right)^{-1}
	where $\chi(i,b)$ denote the index of  the  training  sequence used by user $i,b$. Note that  indexes of the  projection matrix $\textbf{F}^{[b]}_{ib}$ are added to the channel  estimate since its distribution depends on user $i,b$ spatial  signature.
%	Using the  orthogonality  characteristic of the MMSE  estimate, the wireless channel  of each user $i,b$  can be  decomposed as
%	\begin{align}\label{eq:channel_decomp}
%	& {\textbf{g}}^{[b,ib]}_{ib}= \hat{\textbf{g}}^{[b,ib]}_{ib}+\tilde{\textbf{g}}^{[b,ib]}_{ib},
%	\end{align}
%	where $\tilde{\textbf{g}}^{[b,ib]}_{ib}   \sim \mathcal{C}\mathcal{N} \left(0, F^\dagger_{ib}\textbf{R}^{[b]}_{ ib } F_{ib}  -    F^\dagger_{ib} \textbf{R}^{[b]}_{ib}   \left( \frac{1}{{\rho_p}}\textbf{I}_M+  \sum_{r=1}^{N_c}    \sum_{u \in \Sigma (\chi(i,b),r)}^{}    \textbf{R}^{[b]}_{ur}   \right)^{-1} \textbf{R}^{[b]}_{ib}   F_{ib}   \right)  $ represents the uncorrelated estimation error. 
 During UL data transmission, BS $b$ receives the  following  data signal
	\begin{align}\label{eq:uplink_data}
	& \textbf{Y}_{u}^{[b]}  =  \sqrt{\rho_u}\sum_{r=1}^{N_c} \sum_{l=1}^{\tau}   \sum_{i \in \Sigma (l,r)}^{}     \textbf{ g}^{[b]}_{ir}  d_{ir}  + \textbf{w}_u,
	\end{align}
	where $\textbf{{w}}_u \in \mathbb{C}^{ M \times 1}$ refers to  an additive  white  Gaussian  noise vector with i.i.d. $\mathcal{C}\mathcal{N}(0,1)$ entries and $\rho_u$ denotes the UL data transmission power. 
	We consider linear detection where, the  signal  of  each user $i,b$ is estimated using a precoded maximum ratio combining receiver. In order to detect the signal  of user $i,b$, BS $b$ uses $  \frac{ \textbf{F}^{[b]}_{ib} \hat{\textbf{g}}^{[b,ib]}_{ib} }{\lVert\hat{\textbf{g}}^{[b,ib]}_{ib}\rVert} $ as receive filter. The estimate of the signal of user $i,b$ can be decomposed as follows:
	\begin{align}\label{eq:uplink_detection}
	 \hat{d_{ib}}  =&
	  \underbrace{{\hat{\textbf{g}}^{[b,ib] \dagger}_{ib} }  \hat{\textbf{g}}^{[b,ib]}_{ib}  d_{ib}}_{DS^{[b]}_{ib} } 	+\underbrace{\sum_{r=1}^{N_c}  \sum_{ \substack{ u \in \Sigma (\chi(i,b),r) \\ ur \neq ib}}^{}     {\hat{\textbf{g}}^{[b,ib] \dagger}_{ib} }  \hat{\textbf{ g}}^{[b,ib]}_{ur}  \; d_{ur}}_{I^{[b,C]}_{ib} }+    \underbrace{{\hat{\textbf{g}}^{[b,ib] \dagger}_{ib} }   \frac{\textbf{w}^{[ib]}_u}{\sqrt{\rho_u}}}_{N^{[b]}_{ib} } \\
	  		&+ \underbrace{\sum_{r=1}^{N_c} \sum\limits_{ \substack{ u \in \Sigma (\chi(i,b),r) }}^{}    {\hat{\textbf{g}}^{[b,ib] \dagger}_{ib}  }  \tilde{\textbf{ g}}^{[b,ib]}_{ur}  + 
	  			\sum_{r=1}^{N_c} \sum_{l \neq \chi(i,b) }^{\tau}   \sum_{u \in \Sigma (l,r)}^{}   {\hat{\textbf{g}}^{[b,ib] \dagger}_{ib} }   \textbf{ g}^{[b,ib]}_{ur}  \; d_{ur}}_{I^{[b,NC]}_{ib}} 
	  		, \nonumber
	  	\end{align}

	where $\tilde{\textbf{g}}^{[b,ib]}_{ib} $ represents the channel  estimation error $\forall i,b$ which  is  defined as ${\textbf{g}}^{[b,ib]}_{ib}= \hat{\textbf{g}}^{[b,ib]}_{ib}+\tilde{\textbf{g}}^{[b,ib]}_{ib}$. Here  $DS^{[b]}_{ib}$,  $I^{[b,C]}_{ib}$,   $N^{[b]}_{ib}$ and  $I^{[b,NC]}_{ib}$ refer to  the desired signal,  the impact of copilot  user interference,  additive noise and   non-coherent interference, respectively. 
	As we can see in ~\eqref{eq:uplink_detection}, the power of  copilot interference depends on to which degree the spatial  signatures of copilot users overlap.  Although increasing the pilot reuse factor increases the number of active users, intra-cell copilot interference can be completely mitigated if the same pilot sequence is allocated to users with non-overlapping spatial  signatures. Scheduling an excess of users with  non-overlapping spatial  signatures implies that all  the  system  DoFs need to be exploited when  possible.
	Consequently, selecting copilot users, in each cell, is then of paramount importance.  In the next section,  we  address the problem of intra-cell  copilot interference through appropriate spatial signature-based user grouping.

	\section{An Alternative Approach  to Spatial User Grouping: A Spatial Basis Coverage Problem}

	Although  previously proposed grouping approaches get the work done and  provide considerable performance increase for both FDD and TDD modes, they suffer, nevertheless, from a range of shortcomings that may limit the potential of spatial division multiplexing. Indeed, maximum coverage of all the available  DoFs should be emphasized. When applied to  the spatial division problem, classical clustering approaches concentrate on the mutual distance between user channels subspaces with little regard to the final coverage of independent streams. This means that, although the condition of independent spatial information is met, the DoFs  that the massive MIMO  system provides can be underexploited. Addressing these shortcomings can help boost the performance of spatial division multiplexing methods.

	In this paper, we propose a method that, in addition to the requirement of independent spatial subspaces, emphasizes on leveraging  all  DoFs of massive MIMO system.  The same pilot sequence is allocated to users with minimum spatial  signature overlapping which constitute a copilot group within each cell. In addition, the users within each copilot group achieve a maximum  coverage of  all interdependent DoAs. Consequently, in each cell,  a total  of $\tau $ copilot  groups need to  be constructed, each of which is associated with a distinct  training sequence.  In what follows, we formulate the spatial basis coverage copilot user selection problem. We then  provide  efficient algorithms  that  enables to  solve it.  Two  formulations are considered. 
	 
		The first approach is  power agnostic. This  means  that it  neglects the channel  power and  concentrates only on minimum spatial  signature overlapping and maximum coverage  without  discriminating users. 
	The  second approach is  power aware. This means that  users are  prioritized based on their  achievable channel gains in addition to the  criterion considered in the  power agnostic approach.
	The  differences  between  the  two approaches lie mainly  in  complexity, achievable SE gains and fairness. These differences  will be  discussed  in more details  further  in this paper.
	
	\iffalse
	A depiction  of the considered problem, in one cell,  is given in figure 1.
	
	\begin{figure}[h!]
		\centering
		\includegraphics[width=14cm,height=8cm]{set_packing_scheduling}
		\caption{beam coverage based copilot UE selection }
	\end{figure}
	
	In fact, since each  user is  associated with  a set of  eigenvectors, the pilot allocation  task reduces to  constructing the maximum coverage  of the available  eigenvectors  by the users that will be sending a common  training sequence. Using this approach, allows to considerably reduce copilot interference using the simplest reception filter, namely, a matched filter receiver.
	\fi
	\begin{figure}[!htb]
		\centering	
		\includegraphics[scale=.6]{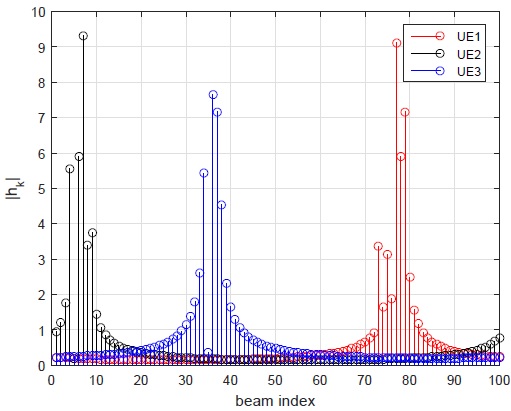}
		\captionsetup{font=footnotesize}
		\caption{Example of  spatial basis coverage for  $M=100$ \label{cover_fig}}
	\end{figure}
	\subsection{Power Agnostic Spatial Basis Coverage}

	In this subsection, we focus on solving the spatial basis coverage  problem in the power agnostic approach. In this case, the BSs know only the set of DFT beams that concentrate a large percentage of channel power (i.e, $\textbf{F}^{[b]}_{ib}, \forall i,b$). The user-beam association is performed as specified in the previous section thanks to DL training~\eqref{eq:projection_matrix} \cite{user_clustering_conf} or  using  UL  preamble \cite{unified}.
	As previously discussed, the BSs perform spatial basis coverage based copilot user selection in order to  schedule users for UL  training. This is  done in order to deal with intra-cell  copilot interference. The out of cell copilot  interference will be addressed later in this paper. In the power agnostic case, the BSs  do not take into  consideration  the achievable gain along  each beam. Consequently,  in this case, the problem  reduces to  scheduling users with minimum  spatial signature overlapping and  maximum  coverage of  the DFT beams.  This actually  simplifies  the  problem at  hand and  enables to derive  the  desired grouping  with low complexity. The  power agnostic  approach is  also characterized by  the  upside  of fairness  since  it does not  discriminate  scheduled  users based on their  channel  gain.  
	However, this  means that more flexibility should be allowed when constructing copilot users. In  fact, since we cannot prioritize users based on their channel gains, it  may be wise to allow for some spatial   overlapping. We also allow for another  degree of flexibility  in this problem, namely, pilot reuse in  each cell. In fact, in this  work,  we consider that the reuse factor  of  each pilot sequence  can vary from one cell  to the other. This implemented in order to allow for a more flexible specific training  sequences allocation to the  copilot  groups when dealing with  inter-cell copilot interference.
	The main principle of the  spatial basis coverage problem is depicted in Figure (\ref{cover_fig}).

	We consider $\tau$ copilot groups (covers) per cell $C^{[b]}_k,k=1,\ldots,\tau, b=1,\ldots,N_c$. Each  copilot group in each cell will  be associated with  a distinct pilot sequence. 
	We start by defining $x^{[k]}_{i,b}, \;\forall i,b,k $ and $y^{[k]}_{s,b} , \;\forall s=1,...,M, b=1,...,N_c, k=1,...,\tau$, which  are given by	
	\begin{equation}\label{eq:x_comb}
	\begin{aligned}
	& x^{[k]}_{\{i,b\}}   =
	\left\{
	\begin{array}{ll}
	1  & \mbox{if  user $i,b$ is selected in copilot group $C^{[b]}_k$.}  \\
	0 & \mbox{otherwise. } 
	\end{array}
	\right.\\
	& y^{[k]}_{s,b}   =
	\left\{
	\begin{array}{ll}
	1  & \mbox{if  beam $\textbf{f}_s$ is covered in cell $b$ and  copilot group $C^{[b]}_k$.}  \\
	0 & \mbox{otherwise. } 
	\end{array}
	\right.\\
	\end{aligned}
	\end{equation}
	
	Formally, under the  power agnostic approach, the spatial basis coverage based copilot UE selection problem  can be formulated as follows:
	
	\begin{align}\label{eq:problem_coarse}
	\underset{Y}{\text{max}} & \sum_{k=1}^{\tau}  \sum_{b=1}^{C} \sum_{s=1}^{M}  y^{[k]}_{s,b}\\\tag{10a}
	\text{subject to} \;  & \sum_{i}^{}  x^{[k]}_{\{i,b\}}  \leq U^{[k]}_{b} \;\; \forall k=1...\tau, \;\;\forall b=1,...,N_c \\\tag{10b}
	& \sum_{i, f_s \in F_{ib} }^{} x^{[k]}_{\{i,b\}}   \geq y^{[k]}_{s,b}  \;\; \forall k=1...\tau,\;\;\forall b=1,...,N_c, \nonumber
	\end{align}
	$(10a)$   guarantees that the number of users in a given copilot group $C^{[b]}_k,k=1,\ldots,\tau, b=1,\ldots,N_c$,  is upper bounded by $U^{[k]}_{b},k=1,\ldots,\tau, b=1,\ldots,N_c$. Note that $U^{[k]}_{b}$ is a design parameter that defines the  reuse factor of a given pilot sequence in each cell. 
	Depending on the considered setting, $U^{[k]}_{b}$ can be the same or differs from one cell to the other. 
	$(10b)$    guarantees that,  for any covered  beam $f_s$  in cell $b$, in   copilot group $C^{[b]}_k$, at least one user $i,b$ with $\textbf{f}_s \in \textbf{F}^{[b]}_{ib}$ is scheduled for UL training in copilot group $C^{[b]}_k$. 
	We start by showing the computational  intractability of problem~\eqref{eq:problem_coarse}.
	
	\begin{Lemma}\label{spatial_multiplexing:theor:NP_coarse}
		The  spatial basis coverage based copilot UE selection problem~\eqref{eq:problem_coarse} is  NP-hard. 
	\end{Lemma}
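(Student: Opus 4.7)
The plan is to establish NP-hardness via a polynomial-time reduction from the classical Maximum Coverage Problem (MCP), which is known to be NP-hard through its standard reduction from Set Cover. My goal is to show that problem~\eqref{eq:problem_coarse} contains MCP as a special case, so that any polynomial-time exact algorithm for the former would yield one for the latter.

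First, I would restrict~\eqref{eq:problem_coarse} to a single cell ($N_c = 1$) with a single copilot group ($\tau = 1$) of capacity $U^{[1]}_{1} = \kappa$ for an arbitrary positive integer $\kappa$. In this restriction, the cell and copilot-group indices become trivial, and the problem collapses to: select at most $\kappa$ users in order to maximize the number of DFT beams $\textbf{f}_s$ for which $\textbf{f}_s \in \textbf{F}^{[1]}_{i1}$ for at least one selected user $i$. Given an arbitrary MCP instance with universe $\{1, \ldots, M\}$, subset collection $\{S_1, \ldots, S_N\}$ and budget $\kappa$, I would map it to such an instance by creating one user per subset and setting $\textbf{F}^{[1]}_{i1} = S_i$. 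The mapping is clearly computable in polynomial time in the size of the input.

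The correctness of the reduction hinges on verifying that the restricted objective coincides with $\lvert \bigcup_{i \in \mathcal{S}} S_i \rvert$, where $\mathcal{S}$ denotes the set of selected users. This follows from the structure of constraint (10b) combined with the maximization of $\sum y$: since each $y^{[k]}_{s,b}$ appears only in the objective (to be maximized) and in an upper-bound constraint, any optimal solution sets $y^{[k]}_{s,b}$ to one whenever (10b) admits it, i.e., whenever beam $\textbf{f}_s$ is covered by at least one selected user. Consequently, an exact algorithm for~\eqref{eq:problem_coarse} would solve the MCP instance optimally, transferring the known NP-hardness of MCP to~\eqref{eq:problem_coarse}.

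Since the reduction is direct, I do not anticipate any significant obstacle. The only point requiring explicit justification is the correspondence between the $y$-variables and beam coverage under maximization, which I have outlined above; the additional generality of arbitrary $\tau$ and $N_c$ introduces no further difficulty because hardness is inherited from the special case already constructed. An alternative reduction directly from Set Cover would yield stronger inapproximability consequences, but the MCP reduction suffices for the NP-hardness claim of Lemma~\ref{spatial_multiplexing:theor:NP_coarse}.
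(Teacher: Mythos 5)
Your proof is correct and follows essentially the same route as the paper: restrict to a single cell and a single pilot sequence, observe that the restricted problem is exactly the Maximum Coverage Problem, and inherit NP-hardness from that special case. You merely spell out the details (the explicit MCP-to-instance mapping and the argument that the $y$-variables saturate under maximization) that the paper leaves implicit.
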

	\begin{proof}
		For $C=1$ and $\tau=1$,~\eqref{eq:problem_coarse} is equivalent to a \textit{maximum coverage problem} which  is known to be NP-hard \cite{GMC}. Consequently,~\eqref{eq:problem_coarse} is also  NP-hard. 
	\end{proof}
 In order to  obtain an efficient suboptimal solution of~\eqref{eq:problem_coarse}, we use two nested greedy phases. In the upper phase, the algorithm produces $\tau$ maximum  coverages of the DFT matrix vectors ($\textbf{F}$), in each cell. The maximum covers $C^{[b]}_{k}, k=1..\tau, b=1,..,N_c$, are computed successively in  a greedy manner.  Each of the  maximum  covers  is  computed using another greedy method that  goes as follows. For each $C^{[b]}_{k}, k=1..\tau, b=1,..,N_c$, the set of  uncovered beams  is initialized as the  vectors of the DFT matrix. Then  users are added to $C^{[b]}_{k}$ successively while  selecting, at each iteration, the  user  with the spatial  signature  that  cover a maximum of the uncovered  DFT columns. This  procedure  is  repeated until attaining  the  reuse constraint $U^{[k]}_{b}$  for  each  copilot group,  in each  cell.
	Different from previously proposed algorithms, the present approach enables to satisfy the spatial independence requirements within each copilot group  while offering a maximum utilization of the excess of DoFs.
	The detailed algorithm  is given in table \ref{agnostic}. We denote by  $\Gamma(b)$ the set of users in cell $b=1,\ldots,N_c$.
	\begin{center}
		\begin{tabular}{ l  }
			\hline
			\hline
			\emph{Initialize}: Copilot groups sets $C^{[b]}_k= \emptyset,k=1,\ldots,\tau, b=1,\ldots,N_c$, \\

			User specific beam  matrices $\textbf{F}_{ib},  \forall i \in \Gamma(b) ,b=1,\ldots,N_c$\\
	
			$1. $\textbf{For} $b=1:N_c$ \textbf{do}:\\
	
			$2. $\textbf{For} $k=1: \tau$ \textbf{do}:\\
		
			$3. $Define the set $Un=\textbf{F}$ as the set of uncovered beams.\\
	
			$4. $\textbf{For} $j=1: U^{[k]}_{b}$ \textbf{do}:\\ 
	
			$5. $ $i^*  \longleftarrow     \underset{i \in \Gamma(b)}{\text{argmax }}  \lvert \textbf{F}_{ib} \cap  Un  \rvert $\\
		
			$6. $ $Un \longleftarrow Un \setminus \{\textbf{F}_{i^*b} \cap  Un\}$\\
		
			$7. $ $C^{[b]}_{k} \longleftarrow C^{[b]}_{k} \bigcup   i^*$\\
		
			$8. $ \textbf{End for}\\
	
			$9. $ \textbf{End for}\\
		
			$10.$ \textbf{End for}\\
			\hline
			\hline
		\end{tabular}
		\captionof{table}{\textbf{Power Agnostic Spatial Basis coverage based copilot UE selection}\label{agnostic}}
	\end{center}
	The algorithm in  table~\ref{agnostic} produces $\tau$ copilot user groups in each cell. Each  copilot group  maximizes the  coverage  of  the DFT vectors while minimizing the  overlapping between copilot users spatial signatures. Consequently, the CSI estimation of each user can be enhanced by a simple linear projection. Note that the proposed algorithm allows for some subspace overlapping. This  is actually needed since users are not prioritized based on their  channel gains. We now proceed by deriving the performance guarantee of the proposed algorithm.
	\begin{Theorem}\label{spatial_multiplexing:theor:ratio_coarse}
		The  algorithm in table~\ref{agnostic}  provides an $(1-(\frac{\tau-1}{\tau}) ^{\tau} )(1-\frac{1}{e})$-approximation of the optimal solution of  problem~\eqref{eq:problem_coarse}.
	\end{Theorem}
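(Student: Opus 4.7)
The plan is to decompose the analysis across cells. Both the objective $\sum_{k}\sum_{b}\sum_{s}y^{[k]}_{s,b}$ and the constraints in~\eqref{eq:problem_coarse} are separable over $b=1,\ldots,N_c$, and the algorithm in Table~\ref{agnostic} processes each cell in isolation, so it suffices to establish the approximation ratio $(1-((\tau-1)/\tau)^{\tau})(1-\tfrac{1}{e})$ for a single fixed cell $b$. I would peel off the two nested loops and analyze them as two distinct sources of loss, then combine them multiplicatively.

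For the inner loop, fix an outer index $k$. Steps 4--8 solve the maximum-coverage subproblem
\begin{equation*}
\max_{S\subseteq\Gamma(b),\,|S|\le U^{[k]}_b}\Bigl|\bigcup_{i\in S}\textbf{F}^{[b]}_{ib}\Bigr|
\end{equation*}
by the classical greedy rule that repeatedly picks the user with the largest marginal coverage of yet-uncovered DFT columns. Since this set function is monotone and submodular in $S$, the Nemhauser--Wolsey--Fisher theorem yields a $(1-1/e)$-approximation for each single-group subproblem.

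For the outer loop, let $\mathrm{OPT}^{(b)}$ denote the joint optimum of~\eqref{eq:problem_coarse} restricted to cell $b$, let $G^{(b)}_k$ be the coverage returned by the greedy in outer round $k$, and let $R_k = \mathrm{OPT}^{(b)} - \sum_{j<k} G^{(b)}_j$ be the residual gap at the start of round $k$. By averaging over the $\tau$ groups of any feasible solution to~\eqref{eq:problem_coarse}, at least one group contributes $R_k/\tau$ to the residual, so the exact single-group optimum available at round $k$ is at least $R_k/\tau$. If the inner subproblem were solved exactly, the residual would therefore shrink by a factor $(1-1/\tau)$ per round, giving $R_{\tau+1}\le ((\tau-1)/\tau)^{\tau}\mathrm{OPT}^{(b)}$ and hence a $(1-((\tau-1)/\tau)^{\tau})$ outer approximation. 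Layering the $(1-1/e)$ inner bound on top of this residual recursion yields the claimed product.

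The hardest step is making the multiplicative composition rigorous: the $(1-1/e)$ degradation at each inner solve must not interact with the residual-shrinkage recursion in a way that produces a more conservative combined ratio such as $1-(1-(1-1/e)/\tau)^{\tau}$. I would address this by comparing the actual algorithm against an idealized ``outer-greedy with exact inner'' procedure on a copy of the same instance, showing via monotonicity and submodularity of the aggregate coverage across copilot groups that the former attains at least $(1-1/e)$ of the latter, and then bounding the latter by $(1-((\tau-1)/\tau)^{\tau})\mathrm{OPT}^{(b)}$ with the residual-gap induction above. Summing over $b$ recovers the overall guarantee.
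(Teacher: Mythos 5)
Your proposal follows the same skeleton as the paper's proof: a per-cell decomposition of~\eqref{eq:problem_coarse}, the classical $(1-\frac{1}{e})$ guarantee for the inner greedy maximum-coverage step, a sequential outer-greedy bound of $1-(\frac{\tau-1}{\tau})^{\tau}$, and a multiplicative combination of the two. The one technical difference lies in how the outer bound is derived: the paper encodes the greedy invariant $V(C^{[b]}_{opt})\le\sum_{k<t}V(C^{[b]}_{k_{opt}})+\tau V(C^{[b]}_{t_{opt}})$ as a family of linear constraints and extracts the worst-case ratio by solving the dual LP in closed form, whereas you unroll the equivalent residual recursion $R_{k+1}\le(1-\frac{1}{\tau})R_k$ directly; these are interchangeable, yours being the more elementary route and the paper's being the one it reuses verbatim for Theorem~\ref{spatial_multiplexing:theor:ratio_knowledge}. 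Where you genuinely add something is in flagging the composition step: the paper simply multiplies the two ratios, implicitly treating the per-iteration optima $C^{[b]}_{k_{opt}}$ as if they arose from an exact outer greedy, even though the residual state at round $k$ is shaped by the approximate inner solves. Your proposed coupling against an idealized exact-inner procedure is one legitimate way to close this; an alternative is to invoke the standard approximate-oracle analysis, which gives the ratio $1-(1-\frac{1-1/e}{\tau})^{\tau}$ and to observe that this dominates the claimed product $(1-\frac{1}{e})(1-(\frac{\tau-1}{\tau})^{\tau})$ by concavity of $\beta\mapsto 1-(1-\frac{\beta}{\tau})^{\tau}$ together with its vanishing at $\beta=0$, so the stated guarantee survives either way.
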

	\begin{proof}
		See \ref{ratio_coarse}.
	\end{proof}

	\subsection{Power Aware Spatial Basis Coverage}

	In the  power aware approach, users are prioritized  based on the power of their signals along each direction.   The  resulting problem provides a more efficient  grouping since it  takes  into  consideration the overlapping between copilot users spatial signatures,  the  coverage of the signal  space and  the  power of each user  channel. Such grouping provides a higher SE gain.  However, this improvement comes at the price  of  augmented complexity since  users are discriminated based on their  channel gains.
	In this case, we define a different  value for each  beam depending on which user is covering it. For each user $i,b$, the  value  associated with beam $\textbf{f}_s, s=1,\ldots,M$ is given by  $\zeta^{[s]}_{ib}$, where $\zeta^{[s]}_{ib}$ is the  power of  user $i,b$ channel  along $\textbf{f}_s, s=1,\ldots,M$. This consideration changes the formulation of the spatial basis coverage based copilot UE selection problem~\eqref{eq:problem_coarse}.  The main idea of providing maximum  coverage of the DFT beams, in each cell and for each pilot sequence, still holds but the actual gain  associated with each beam  will also be taken into  consideration. The resulting combinatorial  optimization  problem can be formulated as follows
	\begin{align}\label{eq:problem_knowledge}
	\underset{Y}{\text{max}} & \sum_{k=1}^{\tau} \sum_{b=1}^{N_c} \sum_{i\in \Gamma(b)}^{}  \sum_{f_s\in \textbf{F}}^{}	\zeta^{[s]}_{ib} y^{[s,k]}_{\{i,b\}}     \\\tag{11a}
	\text{subject to} \;  & \sum_{i\in \Gamma(b), f_s \in F_{ib}}^{} y^{[s,k]}_{\{i,b\}}   \leq 1 \;\; \forall k=1...\tau, \;\;\forall b=1...N_c \\\tag{11b}
	& \sum_{i\in \Gamma(b), f_s \in F_{ib} }^{} x^{[k]}_{\{i,b\}}    \geq     y^{[s,k]}_{\{i,b\}}   \;\; \forall k=1...\tau,\;\;\forall b=1...N_c, \\\tag{11c}
	& \sum_{i\in \Gamma(b)}^{} x^{[k]}_{\{i,b\}}   \leq U^{[k]}_{b}   \;\; \forall k=1...\tau,\;\;\forall b=1...N_c,\nonumber
	\end{align}
	The constraints, in $(11a)$, guarantees that each beam is covered by at most one user.  
	$(11b)$    guarantees that,  for any covered  beam $f_s$  in cell $b$, in   copilot group $C^{[b]}_k$, at least one user $i,b$ with $\textbf{f}_s \in \textbf{F}_{ib}$ is scheduled for UL training in copilot group $C^{[b]}_k$. 
	$(11c)$ guarantees that  the total number of  the users associated with a given pilot sequence in a given  cell  is bounded.
	The difference between~\eqref{eq:problem_coarse} and~\eqref{eq:problem_knowledge} is mainly  the fact that  the actual  gain along each DFT beam  is taken into  consideration. This means that, the BS can optimize its pilot allocation accordingly with the final aim of maximizing the  total weight of the  covered DFT beams.  This means  that, in addition  to reducing copilot interference thanks to the non-overlapping  spatial  signatures, the users are selected such that  the achievable gain  along all the  available  independent  beams is maximized. We start by showing the computational  intractability of problem~\eqref{eq:problem_knowledge}.
	\begin{Lemma}\label{spatial_multiplexing:theor:NP_knowledge}
		The spatial basis coverage based copilot UE selection problem~\eqref{eq:problem_knowledge} is  NP-hard. 
	\end{Lemma}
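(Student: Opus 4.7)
The plan is to mirror the strategy used in Lemma~\ref{spatial_multiplexing:theor:NP_coarse}: I will exhibit a polynomial-time reduction from a known NP-hard problem to a restricted instance of~\eqref{eq:problem_knowledge}. The natural candidate is again the classical \emph{maximum coverage problem} (MCP), which is known to be NP-hard and which matches the set-theoretic flavour of our formulation almost verbatim.

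Concretely, given an MCP instance consisting of a ground set $\{u_1,\ldots,u_M\}$, a collection of subsets $S_1,\ldots,S_n$ and a cardinality budget $k$, I would construct an instance of~\eqref{eq:problem_knowledge} with a single cell ($N_c=1$), a single pilot sequence ($\tau=1$) and $n$ users. The DFT basis $\textbf{F}=\{\textbf{f}_1,\ldots,\textbf{f}_M\}$ is identified with the ground set, each user's spatial signature is set to $\textbf{F}_{i1}=\{\textbf{f}_s : u_s \in S_i\}$, all weights are chosen uniformly as $\zeta^{[s]}_{i1}=1$ whenever $\textbf{f}_s\in\textbf{F}_{i1}$, and the pilot reuse budget is fixed at $U^{[1]}_1=k$. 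This construction is clearly polynomial in $n$ and $M$.

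Under these parameters, constraint~$(11a)$ forces every beam to be claimed by at most one user, constraint~$(11b)$ couples each active beam variable to the selection of a covering user, and constraint~$(11c)$ caps the number of selected users at $k$. Because all active weights are equal to one, in any optimal solution the binary variables $y^{[s,1]}_{\{i,1\}}$ saturate~$(11a)$ exactly when beam $\textbf{f}_s$ is covered by some selected user. Consequently the objective of the restricted~\eqref{eq:problem_knowledge} collapses to $|\bigcup_{i:\,x^{[1]}_{\{i,1\}}=1} S_i|$, which is precisely the MCP objective. Hence an optimal solution of~\eqref{eq:problem_knowledge} yields an optimal solution of MCP, and NP-hardness transfers.

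I do not expect a real obstacle: the only subtlety is verifying that the uniform-weight restriction does not accidentally trivialise~\eqref{eq:problem_knowledge}, and that constraint~$(11a)$ genuinely implements the ``count each beam at most once'' behaviour required by MCP. Both are immediate consequences of the positivity of the weights. As an alternative reduction, one could also invoke Lemma~\ref{spatial_multiplexing:theor:NP_coarse} directly, since~\eqref{eq:problem_knowledge} with unit weights and a suitable grouping of $y$ variables reduces back to~\eqref{eq:problem_coarse}; the MCP route is preferred because it is self-contained and avoids manipulating the two different indexings of $y$ across the formulations.
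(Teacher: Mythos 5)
Your proposal is correct, and it reaches the conclusion by a slightly different route than the paper. The paper's own proof is a one-line restriction argument: set $N_c=1$ and $\tau=1$ and observe that~\eqref{eq:problem_knowledge} \emph{is} an instance of the generalized maximum coverage problem (GMC) of Cohen and Katzir, whose NP-hardness is taken as known; the weighted objective $\sum_s \zeta^{[s]}_{ib} y^{[s,k]}_{\{i,b\}}$ together with the ``each beam claimed by at most one user'' constraint is exactly the GMC structure, so no further reduction is needed. You instead restrict one step further, to unit weights $\zeta^{[s]}_{i1}=1$, and exhibit an explicit encoding of a classical maximum coverage instance, so that the objective collapses to $\lvert\bigcup_{i\,:\,x^{[1]}_{\{i,1\}}=1} S_i\rvert$. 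Both arguments are valid restriction/reduction proofs. Yours buys self-containedness (it bottoms out at the textbook MCP rather than at the less standard GMC) and makes the reduction fully explicit, at the cost of not highlighting that~\eqref{eq:problem_knowledge} is \emph{structurally} a GMC --- an observation the paper reuses later when designing the power-aware algorithm and proving Theorem~\ref{spatial_multiplexing:theor:ratio_knowledge}. The only point worth making explicit in a final write-up is that an arbitrary set system $\{S_i\}$ can indeed be realized as admissible spatial signatures $\textbf{F}_{i1}$, i.e.\ that the signatures are treated as free input data of the combinatorial problem; the paper makes the same implicit assumption, so this is a presentational remark rather than a gap.
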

	\begin{proof}
		For $C=1$ and $\tau=1$, the optimization problem~\eqref{eq:problem_knowledge} is equivalent to a \textit{Generalized Maximum  Coverage Problem} (GMC) which is known to be NP-hard. Consequently,~\eqref{eq:problem_knowledge} is also NP-hard.\\
	\end{proof}
	The  proof of computational  intractability  provides us with  insight on how to tackle problem~(\ref{eq:problem_knowledge}) efficiently. Indeed, in order to obtain an efficient suboptimal solution for ~(\ref{eq:problem_knowledge}), we adopt a successive coverage approach as the previous algorithm. The difference here comes in the construction of each copilot group  where the actual power along each beam needs to be considered. To this end, $ \forall \; k=1..\tau, b=1,..,N_c$, a  GMC problem is solved. The proposed approach is based on a modification of the coverage algorithm in \cite{GMC}. While,  in \cite{GMC}, a GMC is solved leveraging  a greedy procedure and Dynamic programming for Knapsack problems, we propose a two nested greedy phases in addition to replacing  Dynamic programming with the greedy algorithm for Knapsack problems. 
	Replacing  Dynamic programming by the greedy algorithm for Knapsack problems results in  reducing  the complexity of practical implementation. 
	
	The present approach enables to satisfy the spatial independence requirements within each copilot group  while offering a maximum utilization of the excess of DoFs. Since it takes into consideration the actual power of users signals, the present approach enables also to discriminate between users based on their channel gain in each direction. This, ultimately, results in more efficient utilization of the system's DoFs by prioritizing users  with high signal power in each direction.\\
	Before providing the detailed algorithm that addresses~\eqref{eq:problem_knowledge}, some definitions are now in order.  We  define a  user allocation $A$ as  a triple $A=(\phi,\xi,h)$, where   $\phi$ represents the set of selected users, $\xi$  denotes  the  set of corresponding covered beams  and $h$  is an assignment  from $\xi$ to $\phi$ such that $\forall   f_s \in \xi,\; h(f_s)$ denotes the user covering beam $f_s$. 
	For a given allocation $A$, we define $V(A)= \sum_{f_s \in \xi }^{} \zeta^{[s]}_{h(f_s),b}  $ as the value of $A$ and $W(A)= \lvert  \phi  \rvert $ as its weight.
	We also define the residual  value of $([i,b],f_s)$ with respect to $A$  as follows
	\begin{equation}\label{eq:residual_def}
	\begin{aligned}
	& V_A([i,b],f_s)  =
	\left\{
	\begin{array}{ll}
	\zeta^{[s]}_{i,b}  & \mbox{if  $f_s$ is not covered by A.}  \\
	\zeta^{[s]}_{i,b} -\zeta^{[s]}_{h(f_s),b} & \mbox{otherwise. } 
	\end{array}
	\right.\\
	\end{aligned}
	\end{equation}
	
 	The detailed algorithm  is presented  in tables  ~\ref{aware} and ~\ref{greedy}.

	\begin{center}
		\begin{tabular}{ l  }
			\hline
			\hline
			$1. $\textbf{For} $b=1:N_c$ \textbf{do}:\\
	
			$2. $	\textbf{For} $k=1: \tau$ \textbf{do}:\\

			$3. $	$A_g \longleftarrow $ \textbf{Greedy}($S= \{i,i \in \Gamma(b)\},b,k$ )\\
		
			$4. $	Find a single user $i^*,b$ with the highest power along its covered DFT vectors.\\
	
			$5. $	 $V(A_s) \longleftarrow   \sum_{\textbf{f}_s,\textbf{f}_s \in \textbf{F}_{i^*b}}^{} \zeta^{[s]}_{i^*b}  $\\
		
			$6. $	\textbf{If}  $V(A_g) \geq V(A_s)$:\\

			$7. $	$C^{[b]}_{k}  \longleftarrow   (A_g)$\\
	
			$8. $	\textbf{Else} \\
		
			$9. $	$C^{[b]}_{k} \longleftarrow  (A_s)$\\
	
			$10. $  \textbf{ End for}\\

			$11. $   \textbf{End for}\\
			\hline
			\hline
		\end{tabular}
		\captionof{table}{\textbf{Power Aware Spatial Basis coverage based copilot UE selection}\label{aware}}
	\end{center}
	\begin{center}
		\begin{tabular}{ l  }
			\hline
			\hline
			$1. $ $j \longleftarrow  0  $, $A = \emptyset$\\
		
			$2. $ \textbf{While} new DFT vectors with  positive residual  value can be added to $A$ without\\ 
	
			\quad violating the  cardinality  constraint $U^{[k]}_{r} $ ($ W(A) \leq U^{[k]}_{r} $)\textbf{do}:\\
		
			$3. $ Use the greedy algorithm for Knapsack problems in order to select a user $i^*$ and a subset\\ \quad of DFT vectors from $\textbf{F}_{i^*r}$ which has the maximum density (Each DFT vector is given\\ \quad a weight of $1$ if it is not covered in $A$ ). \\ % such that $W_A([i^*,r],\textbf{F}_{i^*r}) \leq U^{[k]}_{r} - W(A)$,
		
			$4. $ $A \longleftarrow  A \oplus ([i^*,r],\textbf{F}_{i^*r})   $\\
		
			$5. $ \textbf{For} $u \notin A$ \textbf{do}:\\
		
			$6. $ \textbf{If} $ W (A  \oplus ([u,r],\textbf{F}_{ur}))\leq U^{[k]}_{r}$ \textbf{and} $ \forall \;\textbf{f} \in \textbf{F}_{ur},  V_{A \oplus ([u,r],\textbf{F}_{ur})} ([u,r],\textbf{f})> 0$ \textbf{do}:\\
		
			$7. $ $A \longleftarrow  A \oplus ([u,r],\textbf{F}_{ur})   $\\
	
			$8. $ \textbf{End for}\\
		
			$9. $ $j \longleftarrow  j+1  $\\
	
			$10. $ \textbf{End While}\\
		
			$11. $ \textbf{Return}($A$)\\
			\hline
			\hline
		\end{tabular}
		\captionof{table}{\textbf{Greedy(S,r,k)}\label{greedy}}
	\end{center}
		
		The algorithm  consists of solving a generalized maximum coverage problem  for each copilot group $C^{[b]}_k,k=1,\ldots,\tau, b=1,\ldots,N_c$, successively.  The main idea of the algorithm  is to use two  nested greedy phases. In the upper phase, the  maximum coverages $C^{[b]}_{k}, k=1..\tau, b=1,..,N_c$ are computed successively in a greedy manner. In order  to obtain each coverage, in the  lower phase, the  algorithm  uses the  residual  value  (\ref{eq:residual_def}) in a greedy procedure so as to choose  a  subset of DFT beams  that  are part of  a  given  user  spatial signature  with  the highest  density. Users are then added  to the selection  as long as the residual  value of their associated DFT vectors  is positive and the pilot reuse constraint is not violated.  When the  greedy  phase ends,  its resulting selection  is compared with the highest value  that can result from selecting a single  user.  The selection  with the best coverage is then returned. This procedure  is  repeated in the each cell $b=1,..,N_c$, for each copilot group $k=1..\tau$. We now derive the performance guarantee of the proposed algorithm.
	\begin{Theorem}\label{spatial_multiplexing:theor:ratio_knowledge}
		The  proposed algorithm in tables  ~\ref{aware} and ~\ref{greedy}  provides an $(1-(\frac{\tau-1}{\tau}) ^{\tau} )\frac{\frac{3}{2}- \frac{e^{-2}}{2}}{1- e^{-2}}$-approximation of the optimal solution of problem~\eqref{eq:problem_knowledge}.
	\end{Theorem}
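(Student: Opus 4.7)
The approximation ratio factorizes cleanly as $\alpha_{\text{outer}} \cdot \alpha_{\text{inner}}$, with $\alpha_{\text{outer}} = 1 - ((\tau-1)/\tau)^\tau$ coming from the successive greedy construction of the $\tau$ copilot groups in Table~\ref{aware}, and $\alpha_{\text{inner}} = \frac{3/2 - e^{-2}/2}{1 - e^{-2}}$ coming from the approximation loss of the inner Generalized Maximum Coverage routine in Table~\ref{greedy}. I would therefore organize the proof in three stages: inner GMC analysis, outer successive-coverage analysis, and combination.

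Stage 1 (inner GMC). For a fixed cell $b$ and pilot index $k$, the subroutine \textbf{Greedy}$(S,r,k)$ followed by the single-best-user fallback $A_s$ (steps 4--9 of Table~\ref{aware}) approximates the single-group GMC subproblem of~(\ref{eq:problem_knowledge}). I would adapt the GMC analysis of~\cite{GMC}, with the key modification that the knapsack subroutine is replaced by the greedy density algorithm rather than a DP/PTAS. The plan has two ingredients. First, a residual-value coverage lemma showing that at each iteration of the while-loop of Table~\ref{greedy}, the marginal value picked up is at least a $1/U_b^{[k]}$-fraction of the current optimal residual; telescoping over $U_b^{[k]}$ iterations and applying the density-greedy selection twice (once for the bulk covering and once for the residual correction) sharpens the naive $(1-e^{-1})$-coverage bound to an $(1-e^{-2})$-bound in the denominator. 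Second, a ``max-of-two'' argument between $A_g$ and $A_s$: since the greedy knapsack can lose up to a factor $1/2$ exactly when the optimum is concentrated on a single heavy user, $A_s$ precisely covers that pathological case, and arithmetic simplification of $\max(A_g, A_s)$ against $V^{\mathrm{OPT}}_{b,k}$ yields the closed form $\alpha_{\text{inner}}$.

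Stage 2 (outer successive coverage). Let $V^*$ denote the optimum of~(\ref{eq:problem_knowledge}) decomposed as $V^* = \sum_{k=1}^\tau V^*_k$ over the $\tau$ groups of an optimal solution, and let $A^{(k)}$ and $S_k = \sum_{j\leq k} A^{(j)}$ denote, respectively, the value of the group produced at outer step $k$ and the cumulative achieved value. An exchange argument applied to the per-group GMC optimum at step $k$ (the projection of $V^*$'s best remaining group is itself a feasible inner GMC instance) yields
\begin{equation*}
A^{(k)} \;\geq\; \alpha_{\text{inner}} \cdot \frac{V^* - S_{k-1}}{\tau},
\end{equation*}
so that the residual $R_k = V^* - S_k$ satisfies $R_k \leq (1 - \alpha_{\text{inner}}/\tau) R_{k-1}$. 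Unrolling gives $S_\tau \geq \bigl(1 - (1 - \alpha_{\text{inner}}/\tau)^\tau\bigr) V^*$. Applying the concavity inequality $1 - (1 - \alpha x)^\tau \geq \alpha\bigl(1 - (1-x)^\tau\bigr)$ (valid for $\alpha \in [0,1]$, a consequence of $x \mapsto 1 - (1-x)^\tau$ being concave on $[0,1]$ with value $0$ at $0$) at $x = 1/\tau$ and $\alpha = \alpha_{\text{inner}}$ gives $S_\tau \geq \alpha_{\text{inner}}\bigl(1 - ((\tau-1)/\tau)^\tau\bigr) V^*$, which is exactly the claimed ratio.

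The principal obstacle lies in Stage 1: replacing the knapsack DP of~\cite{GMC} by the greedy density algorithm degrades the packing subroutine to a $1/2$-approximation, so the clean residual-value telescoping of the original GMC analysis has to be re-derived with the weaker primitive. Establishing that the $A_g$-versus-$A_s$ maximum recovers an $(1-e^{-2})$-type coverage rate (rather than collapsing to $(1-e^{-1})/2$) requires a careful case split on whether the optimal per-group allocation is dominated by one heavy-contribution user or distributed over many small contributions, together with a bookkeeping of the residual densities across greedy iterations. Stage 2, once $\alpha_{\text{inner}}$ is in hand, is routine; the multiplicative combination in Stage 3 is then immediate.
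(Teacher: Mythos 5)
Your proposal is correct in its overall architecture and matches the paper's two-factor decomposition: an outer factor $1-(\frac{\tau-1}{\tau})^{\tau}$ from the successive greedy construction of the $\tau$ groups, and an inner factor from the GMC subroutine with a greedy ($\beta=\tfrac{1}{2}$) knapsack oracle. The routes differ in both stages, though. For the inner bound the paper does \emph{not} re-derive anything: it invokes the Cohen--Katzir ratio $\frac{1+\beta-\beta e^{-1/\beta}}{1-e^{-1/\beta}}$ as a black-box lemma and substitutes $\beta=\tfrac{1}{2}$, whereas you plan to rebuild the analysis; note that the mechanism you sketch (``applying the density-greedy twice'') is not quite where the $e^{-2}$ comes from --- in the GMC analysis it arises because a $\beta$-approximate packing oracle makes the optimal residual decay at rate $1/\beta=2$ per unit of budget, and the $\max(A_g,A_s)$ step then converts the coverage bound into the stated ratio. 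For the outer bound the paper runs a worst-case LP whose dual is solved in closed form to get $\sum_k V(C^{[b]}_{k_{opt}})\geq(1-(\frac{\tau-1}{\tau})^{\tau})V(C^{[b]}_{opt})$ and then simply multiplies by the inner ratio; your residual recursion $R_k\leq(1-\alpha_{\text{inner}}/\tau)R_{k-1}$ followed by the concavity inequality reaches the same product but is actually more careful about the interaction between the approximate inner oracle and the outer greedy, a point the paper's separate-then-multiply argument glosses over. One caveat applies to both: the quantity $\frac{3/2-e^{-2}/2}{1-e^{-2}}\approx 1.66$ exceeds $1$, so it is an ``$\mathrm{OPT}/\mathrm{ALG}$''-style ratio; your concavity step requires $\alpha_{\text{inner}}\in[0,1]$, so strictly you must propagate its reciprocal through Stage 2 (the paper's Lemma 4 has the same inversion issue in its statement). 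With that convention fixed, your plan goes through and is, if anything, tighter in its bookkeeping than the paper's.
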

	
	\begin{proof}
		See \ref{ratio_knowledge}.
	\end{proof}
	The  proposed algorithm in  table~\ref{aware} provide $\tau$ covers of  the DFT matrix beams in  each cell. This results in  $\tau$ copilot user groups, in each cell, that fully exploit all  available DoFs with minimum  overlapping between the  beam sets of each user. This leads to an efficient reduction of intra-cell copilot interference. The main differences between the  power agnostic and  aware cases are  performance  guarantees and complexity. Indeed, the  simplified  power agnostic case enables to  achieve the desired grouping  with  good performance guarantee (see Theorem \ref{spatial_multiplexing:theor:ratio_coarse}) and  low complexity. The  power aware case provide a more efficient  grouping since  it  takes into consideration the  users channel gains. Nevertheless, this  comes with  a penalty of a higher computational  complexity.
	Constructing the copilot user groups enables to  efficiently address the  intra-cell interference. Nevertheless, further SE  gain can be achieved by addressing the issue of  inter-cell copilot interference. This will be the focus of the next section. 

\section{Cross Cell Pilot allocation: a graphical approach }

\subsection{Cross Cell Pilot allocation problem}

     In this paper, we focus on mitigating copilot interference. Constructing copilot groups across cells,  in order to address the problems of intra-cell and inter-cell interference simultaneously, proves to be quite complex. This is due to the complexity of defining a proper grouping metric that is based on the spatial signatures of both useful and interference links.
     This fact motivated the present approach of dealing with interference through two consecutive subproblems addressing  intra and inter-cell copilot interference, respectively. 	A major advantage of such division is the reduction of complexity. Indeed, since copilot user groups have already  been constructed in each cell, addressing out-of-cell copilot interference reduces to allocating specific training sequences to copilot groups. If copilot interference is to be addressed from the beginning as a whole, it will result in a complex pilot allocation problem among all users in all cells which proves to  be complicated.	
	Practically, complete removal of interference is not physically possible. In addition,  copilot user grouping was performed with the clear goal of managing intra-cell interference. Consequently, when dealing with inter-cell copilot interference, previously formed copilot groups  should be maintained.   We propose a scheme in which  pilot allocation is done such that high interference links are suppressed when spatial  signature based receivers are used. In this section, we address this problem using an intelligent pilot assignment scheme.  The basic idea is to infer inter-cell copilot interference from the spatial  signatures of interference links. A training phase to obtain spatial  information of the interference links is  required. This can be implemented  without a large signaling overhead owing to  the slow changing  spatial  information.    We now consider that,  each user $i,b, i=1...K, \; b=1...N_c$ is associated with $N_c$ matrices $\{\textbf{F}^{[r]}_{ib}, r= 1...N_c\}$. Each matrix $\textbf{F}^{[r]}_{ib}$ is constructed in a similar manner to~\eqref{eq:projection_matrix} as follows

	\begin{align}\label{eq:interference_links}
	&   \textbf{F}^{[r]}_{ib}= \{ \textbf{f}_{s} \in \textbf{F},     \frac{\left\|  \textbf{g}^{[r]^\dagger}_{ ib} \textbf{f}_{s}\right\|^2 }{{Tr(\textbf{R}^{[r]}_{ib})}} \geq \alpha \},
	\end{align}

\subsection{Graphical Modeling and proposed solution}

		The first step to  manage inter-cell copilot interference is to construct an interference graph that corresponds to the considered system setting. 
		We construct  an undirected interference graph $\mathcal{G}(\mathcal{C},\mathcal{E})$. Each node $C^{[b]}_k \in \mathcal{C}, \;k=1...\tau,b=1...N_c$ represents  
		copilot user group of index $k $ in a give cell $b$.
		Each edge in $e_{C^{[j]}_{b},C^{[k]}_{l}} \in \mathcal{E}$ represents an interference  link and is associated with a given  weight $w_{C^{[j]}_{b},C^{[k]}_{l}}$. We propose a method for determining the edge weight without accurate SINR measurements since it cannot be obtained before pilot allocation. 
		In this work, we propose to infer interference levels form spatial information. This consideration is due to  the practical  low signaling overhead   that is required.
		
		Since the weight  of each edge quantifies the level  of interference between two copilot groups, an appropriate measure needs to be considered. This task is not an easy one since the weight of the edge between two different copilot groups should properly characterize the levels of resulting interference. The research papers that investigated spatial  division multiplexing proposed different metrics to characterize subspace distances. The most used one is chordal distance. Such metric has  the considerable downside of neglecting the actual signal power in each subspace. In addition, the present framework implies that the weight of each edge need to characterize the mutual interference between two groups of users. Consequently, defining a distance measure between copilot groups is a major issue in our case.
		In order to solve this issue, we call upon hierarchical  clustering  where   measuring distances between groups is commonly encountered. We adopt  a   linkage method in hierarchical  clustering \cite{tot}, namely weighted average linkage.
		To  quantify interference on each link,  we use  spatial signature overlapping between the users forming the  two copilot groups which is  obtained using the  chordal distance between spatial  signatures. The weight of each edge $w_{C^{[j]}_{b},C^{[k]}_{l}}$ is then given by
		\begin{align}\label{eq:weight}
		&  w_{C^{[j]}_{b},C^{[k]}_{l}}= \underset{y \in C^{[j]}_{b},z \in C^{[k]}_{l}}{\text{min}} \{\frac{1}{2  }  \lVert  \textbf{F}^{[b]}_{yb} \textbf{F}^{[b]^\dagger}_{yb} -    \textbf{F}^{[b]}_{zl} \textbf{F}^{[b]^\dagger}_{zl}   \rVert_F  ^2+ \frac{1}{2  }  \lVert \textbf{F}^{[l]}_{yb} \textbf{F}^{[l]^\dagger}_{yb} -    \textbf{F}^{[l]}_{zl} \textbf{F}^{[l]^\dagger}_{zl}  \rVert_F  ^2\}    ,
		\end{align}
		Here $\lVert  \textbf{F}^{[b]}_{yb} \textbf{F}^{[b]^\dagger}_{yb} -    \textbf{F}^{[b]}_{zl} \textbf{F}^{[b]^\dagger}_{zl}   \rVert_F  ^2$ represents the chordal distance between the  spatial  signatures of the  useful signal  of  user  $y \in C^{[j]}_{b}$ and the interference  generated by $z \in C^{[k]}_{l}$.
		$\lVert \textbf{F}^{[l]}_{yb} \textbf{F}^{[l]^\dagger}_{yb} -    \textbf{F}^{[l]}_{zl} \textbf{F}^{[l]^\dagger}_{zl}  \rVert_F  ^2$ denotes the chordal distance between the  spatial  signatures of the  useful signal  of  user  $z \in C^{[k]}_{l}$  and the interference  generated by $y \in C^{[j]}_{b}$.

		The weight expression~\eqref{eq:weight} captures the minimum  chordal distance between  the  spatial signatures of the  interference and useful signals for all users in the two  copilot groups and is inspired by the single  and weighted average linkage, commonly used in hierarchical  clustering \cite{tot}. In each  cell,  users from the same  copilot  group  are  the only devices allowed to  transmit the same UL training sequence. Consequently, during pilot allocation, we  need to make sure that  any given pilot  sequence $\textbf{q}_l, l=1...\tau$ should be allocated to  only one copilot group in  each  cell. In order to  do  so, the weight  of the links between copilot groups from the same  cell  will be given a very large	value $w_{\infty}$, because  intra-cell interference between copilot groups must be avoided. Using this metric we are able to  construct the  interference graph $\mathcal{G}(\mathcal{C},\mathcal{E})$, which is a first step in the  proposed pilot allocation  scheme. An illustration  of  $\mathcal{G}$ is presented in figure $2$ for the  case of $N_c=2$ and  $\tau= 2$.

	\begin{figure}[h!]
		\centering
		\includegraphics[width=14cm,height=8cm]{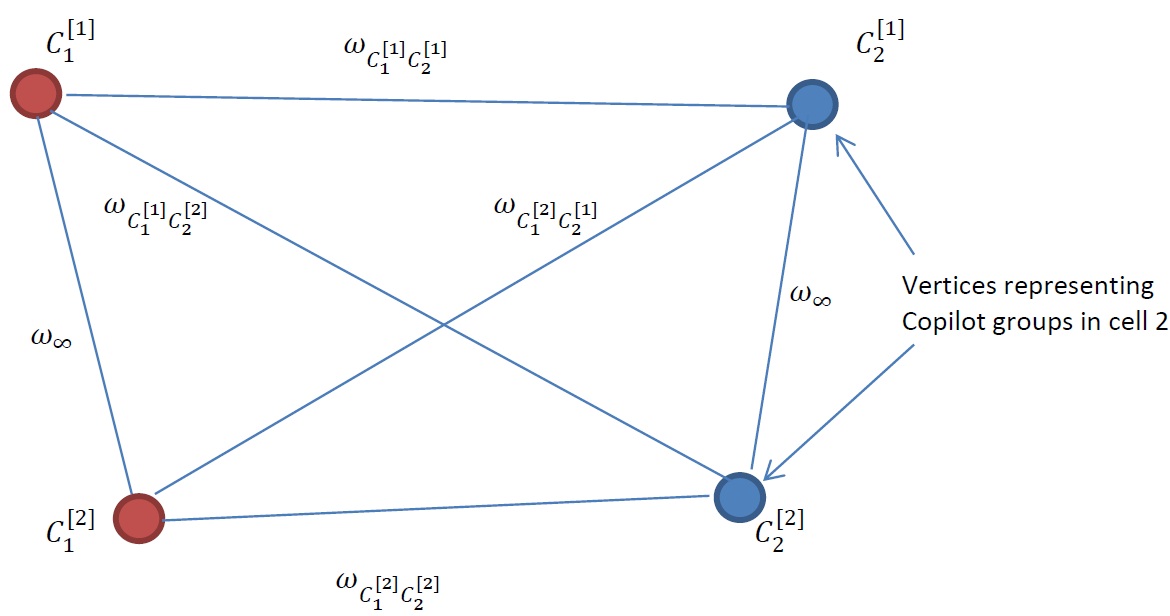}
		\label{System Model}
		\caption{ Interference  graph example }
	\end{figure}

	The considered  problem is  closely related to  MAX-CUT problem \cite{maxk}. Indeed, the task of interference management  reduces to suppressing high pilot contamination between copilot groups. This can be performed by allocating the same training sequence to copilot groups with minimum mutual interference weights. In the considered graphical framework, this task is equivalent to partitioning the interference graph into $\tau$ subgraphs where the copilot groups in each subgraph will be allocated the same training sequence.
	In the graph theory, a cut is a partition of the vertices of the graph into multiple sets. In  weighted graphs, the size of the cut
	is the sum of weights of the edges that cross the cut. A cut is said to be maximal if the size of the cut is not smaller than the size of any other cut. By generalizing this notion to $\tau$ cuts, the MAX-$\tau$-CUT problem is to find a set of $\tau$ cuts that are not smaller in size than any other $\tau$ cuts. Consequently,
	pilot allocation is equivalent to a MAX-$\tau$-CUT problem on the interference graph and can be stated as follows:\\
	\textbf{Pilot sequence allocation problem:} Given the  interference  graph $\mathcal{G}(\mathcal{C},\mathcal{E})$ with  $\tau \times N_c$ nodes and edge weight $w_{C^{[j]}_{b},C^{[k]}_{l}}$ for each  edge $e_{C^{[j]}_{b},C^{[k]}_{l}} \in \mathcal{E}$, partition the  graph into  $\tau$ disjoint sets $P_g, g=1,...,\tau$, such that\\ $ \sum\limits_{\substack{C^{[j]}_{b} \in P_g ,C^{[k]}_{l} \in P_{g'}\\ g \neq g' }}^{}     w_{C^{[j]}_{b},C^{[k]}_{l}}  $ is maximized.\\
	The training sequence  length  constraint  is  already  taken  into  consideration by  the definition of the  number of resulting  sets $\tau$. Since interference links between copilot  users in the  same cell  was assigned  a large weight $w_{\infty}$, we are sure that  all copilot groups within a given  cell  will be  allocated to  different  sets. 
	The max-$\tau$- cut algorithm assigns different  training sequences to copilot groups with  strong spatial  signatures overlapping between the useful and interference signals. The complexity of the proposed pilot allocation algorithm  depends on the  number of  copilot  groups, edges and training sequences. 
	
	A remark on  the complexity of this algorithm  is now in order. Proceeding to sequence allocation, once copilot  groups  are formed, results in a substantial  simplification  of  the problem. Instead of  processing  each user individually, the  proposed method  exploits the formed  copilot  groups in order to  reduce running time of the pilot allocation algorithm. This impact becomes very interesting in an  IoT communication  scenario. 
	
	The  Pilot sequence allocation problem is NP-hard \cite{Korte_comb}, meaning that  the  optimal  solution  is  computationally prohibitive to  obtain. Consequently, we  use the low complexity  algorithm in \cite{maxk}. The heuristic algorithm, in \cite{maxk}, provides an  approximate solution  that  achieves at  a ratio  of $(1- \frac{1}{\tau})$ of  the  optimal  one for a general  MAX-$\tau$-CUT problem, given that all weights in the graph are nonnegative integers.  
	Since all  weights in the considered interference  graph  are positive, using the  heuristic from  \cite{maxk} provides us with  a $(1- \frac{1}{\tau})$-approximation of the optimal  solution for the considered problem.
	The detailed algorithm for cross cell Pilot assignment is given in  table~\ref{cut}.

\begin{center}
	\begin{tabular}{ l  }
		\hline
		\hline
		\textbf{Initialize:}  intra-set weights $W_{g}=0, \forall g=1...\tau$
		$P_g=\emptyset, g=1,...,\tau$ \\
		$1. $ Assign the $\tau$ copilot groups in cell $1$ to different  pilot sets \\
		$2. $ Randomly order the rest of copilot groups.\\
		$3. $ Select the  next copilot group $v$ and assign it to  set  $g^*$ for which\\  $W^v_{g^*}$ is minimized where  $W^v_{g}= \sum_{u \in P_g}^{} w_{v,u} $  \\ 
		$4. $ Update the Average weight of group $g^*$ such that $W_{g^*}=W_{g^*}+W^v_{g^*}$\\
		$5. $ Repeat steps  $3-4$ until all  copilot groups are assigned.\\
		\hline
		\hline
	\end{tabular}
	\captionof{table}{\textbf{Cross cell Pilot assignment algorithm}\label{cut}}
\end{center}

	\section{Numerical Results AND Discussion}
	
	% To be modified after defining the best simulation scenario.
	
In this section, we provide numerical  results demonstrating the performance of the proposed spatial basis coverage  copilot user selection. We  compare the proposed approach with  a conventional  TDD massive MIMO  system where all scheduled, within  each cell, are given orthogonal  training sequences. We then extend the simulation  results to include MAX-$\tau$-CUT pilot allocation. We  consider a network constituted of $N_c=4$  hexagonal  cells.  Each cell has a radius $0.5\;\text{Km}$ from center to vertex. Each cell contain  a massive MIMO  BS at its center, equipped with  $M=128$ equally spaced isotropic antennas. The minimum  distance  between  antenna elements is equal to  $\frac{\lambda}{2}$. Each cell  contains $K=25$ users with randomly generated mean direction of  arrivals. 
The channel vectors of the  different users are generated according to (\ref{eq:channel_decomp_spatial}) where $P=100$. Each  coefficient $ \mu^{[r]^2}_{ib}, \forall i,b,r$  denotes the  path-loss between the user and the target BS. The  path-loss coefficient is  $3.5$. 
For each user $i,b$, the angles of its rays   $\theta^{[r,p]}_{ib}, p=1,\ldots,P$ are uniformly distributed in the interval 
$\big[\theta^{[r]}_{ib}-\Delta\theta^{[r]}_{ib} ,\theta^{[r]}_{ib}+\Delta\theta^{[r]}_{ib}   \big]$ where the AS  is supposed to be the same for all users with $\Delta\theta^{[r]}_{ib}=\Delta= 4^{\circ}$. The coherence interval is set to $T_s=128$ samples, split between training and data transmission. We take $\alpha=0.05$.
In order to  assess the accuracy of channel  estimation,  we take as metric the average individual  mean square error (MSE).

		\begin{figure}[!htb]
			\centering	
			\includegraphics[scale=.6]{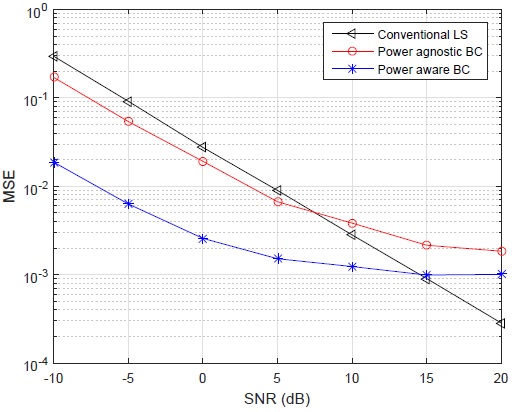}
			\captionsetup{font=footnotesize}
			\caption{Comparison of uplink channel  estimation  MSE  with  $\tau=8$ and  $U^{[k]}_{b}=3, \forall k, b$ \label{MSE}}
		\end{figure}
	
Figure (\ref{MSE}) illustrates a comparison of  {MSE}  performance  of {UL}  channel estimation, as a function of transmit signal to noise ratio
(TX SNR). Figure (\ref{MSE}) shows that  {UL}  channel estimation is improved when using the two proposed spatial basis coverage algorithms in the  low {SNR}  range (up to  approximately $7.5 $dB for the power agnostic  approach and  up to $15$ dB  for  the power aware approach  ). It shows that power aware spatial basis coverage outperform the power agnostic  approach in MSE. This is mainly due to the fact that the power aware approach  takes into consideration  the  channel  gain when forming copilot groups. In addition, the power aware approach  allow for less overlapping in spatial signature when compared to the power agnostic one. 	Figure  (\ref{MSE}) shows also that, as {SNR}  increases, the   performances of the two proposed algorithms reach two distinct error floors. This phenomenon is due to  the truncation error that  results from projection on the users specific spatial subspaces. These error floors depend  on the  rank of the  user's spatial signatures and can be  reduced by 
decreasing  $\alpha $ (i.e.  considering the DFT vectors that concentrate lower levels  of  the  user  channel  power, see \ref{eq:projection_matrix}). Note that, in the conventional  approach, all scheduled users, in  each cell, are allocated orthogonal training  sequences. Consequently, the conventional  CSI estimation uses $U^{[k]}_{b}\times \tau$ orthogonal pilot sequences. On the  other hand, the  proposed algorithms  use training  sequences of  length $\tau$. This explains, in part, MSE performance  in the  high SNR range where the conventional CSI estimation  approach can outperform the proposed  algorithms. However, our method enables to reduce the needed UL training resources which will result, ultimately,  in improving  SE.

 Nevertheless, we should emphasize on the fact that  the presented performances are attained with UL training  overheads of $\tau$ and $U^{[k]}_{b}\times \tau$,  for the  proposed algorithms and  the conventional approach respectively. This means  that, for a reasonable SNR range,  we are able to improve CSI estimation accuracy while reducing  the  UL training  overhead.

		\begin{figure}[!htb]
			\centering	
			\includegraphics[scale=.6]{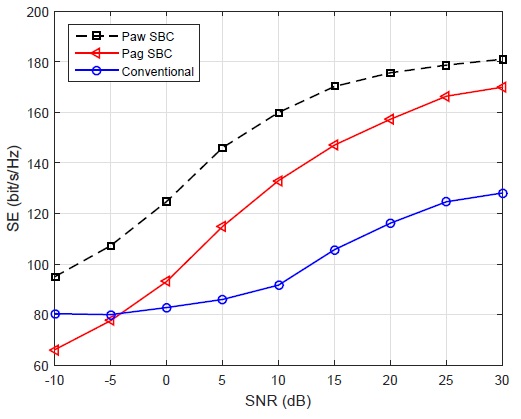}
			\captionsetup{font=footnotesize}
			\caption{Comparison of the achievable average SE  with  $\tau=10$ and  $U^{[k]}_{b}=2, \forall k, b$    \label{diff_snr}}
		\end{figure}

Figure (\ref{diff_snr}) illustrates a comparison of the  achievable SE  between  the proposed spatial basis coverage algorithms and a conventional TDD massive MIMO system, as a function of the  SNR. 	Figure (\ref{diff_snr}) shows that, for an SNR  of $0$ dB,  the power aware and the  power agnostic spatial basis coverage approaches achieve $124.75$ bits/Hz/s and $93.095$ bits/Hz/s respectively. This represent gains of $ 41.983 $bits/Hz/s  and $ 10.328 $ bits/Hz/s, respectively, in comparison with a  conventional  TDD massive MIMO system. As SNR increases the  gain in  spectral  efficiency  that  the proposed spatial basis coverage approaches provide also increases. This is mainly due to  the reduced impact of  additive noise since the  system becomes interference limited which emphasizes the ability of the proposed schemes to  mitigate intra-cell copilot interference.

							\begin{figure}[!htb]
								\centering	
								\includegraphics[scale=.6]{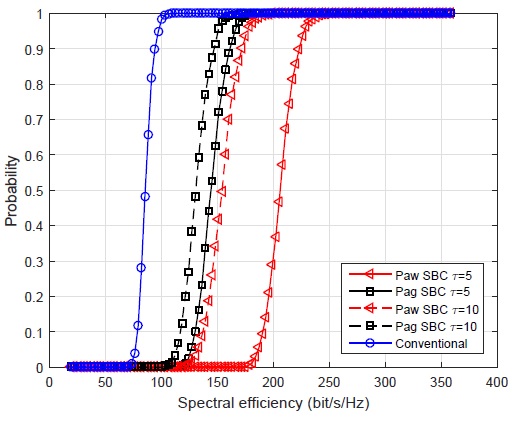}
								\captionsetup{font=footnotesize}
								\caption{Comparison of CDFs of achievable SE for  different  $\tau $ and  $U^{[k]}_{b}$ values with $SNR =10 $ dB   \label{reuse}}
							\end{figure}
		
		Figure (\ref{reuse}) illustrates a comparison of CDFs of the  achievable spectral efficiency  between  the proposed beam coverage algorithms and a conventional TDD mMIMO system, for different  for  different  $\tau $ and  $U^{[k]}_{b}$ values.	The proposed algorithms are applied in the  same  network setting   with $\tau=10, U^{[k]}_{b}=2 \; \forall k,b$ and  $\tau=5, U^{[k]}_{b}=4  \;\forall k,b$, respectively.
		Figure (\ref{reuse}) shows that, for  $\tau=10 $ and  $U^{[k]}_{b}=2 \; \forall k,b$, the power aware  and the  power agnostic spatial basis coverage approaches achieves $5\%$ outage rate around  $132$ bit/s/Hz and  $113$ bit/s/Hz, respectively.  For  $\tau=5 $ and  $U^{[k]}_{b}=4 \; \forall k,b$, $5\%$ outage rate is attained around $186$ bit/s/Hz and $126$ bit/s/Hz, respectively. This gain in performance  is  mainly  due  to the  spared training resources when decreasing  $\tau $ from $10$ to $5$. Figure (\ref{reuse})  shows that  the  impact of increased intra-cell  copilot interference, due to more pilot reuse,  can be efficiently  mitigated through  spatial basis  coverage and the speared training resources.

		\begin{figure}[!htb]
			\centering	
			\includegraphics[scale=.6]{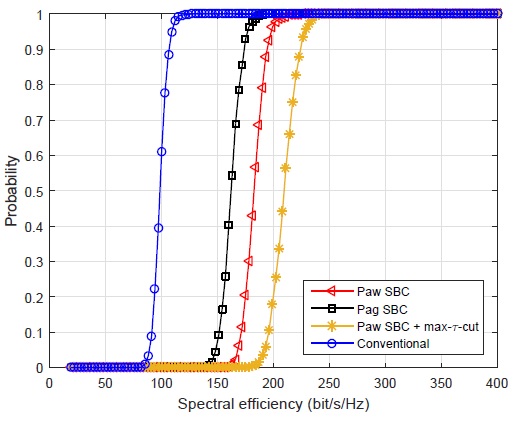}
			\captionsetup{font=footnotesize}
			\caption{Comparison of CDFs of achievable SE for  $\tau=4$ ,  $U^{[k]}_{b}=5$ and  with $SNR =10 $ dB   \label{cut_fig} }
		\end{figure}
		
Figure (\ref{cut_fig} ) illustrates the impact of the proposed max-$\tau$-cut pilot  allocation  algorithm.  	Figure (\ref{cut_fig} ) shows that addressing the  issue of  inter-cell copilot interference  through efficient  pilot sequence allocation  results in  an improvement in the system SE. Indeed, while  the  power aware spatial  basis cover approach  achieves $5\%$ outage rate around  $168$ bit/s/Hz,  the combination with  the  max-$\tau$-cut pilot assignment algorithm   achieves $5\%$ outage rate around  $191$ bit/s/Hz.
Consequently, after constructing copilot user groups based on the  spatial  basis  approach, the  same  diversity  in  spatial signatures can  be leveraged in order to  address the  problem of  inter-cell copilot interference. Although complete  removal of interference  is still  not possible, especially since  copilot groups are  constructed based on the useful  links  spatial  information, non-negligible  performance improvement can  be  achieved  by efficient  pilot sequence  allocation  across  cells.

	\section{Conclusion }
	In this paper, we have studied user scheduling and pilot allocation based on spatial  division multiplexing  for TDD massive MIMO systems. We proposed a copilot user grouping approach based on DFT basis coverage. 
	After associating each user with the DFT  vectors that  concentrate the majority  of  its  channel  power, users are assigned to  copilot groups in order to achieve a  maximum coverage of  the DFT  vectors per group with minimum  overlapping in their  respective spatial  signatures.
	The proposed approach  enables to  increase the spectral efficiency  while reducing the required training overhead. This is achieved by enabling accurate CSI estimation leveraging  spatial diversity. Various numerical results were provided to demonstrate the effectiveness  of the proposed	grouping approach.
	In order to efficiently manage inter-cell copilot interference,  further optimization  is performed. We have proposed a  graphical approach for training sequence allocation  across cells. Leveraging the  interference links spatial information and the  previously constructed copilot groups,  training sequence allocation is formulated as  a max-cut problem. Thus enabling the utilization of  a low complexity  algorithm for training sequences allocation. 
	Although  the proposed approach does not remove entirely  copilot interference,  it provides a practical method to meet essential  requirements  for $5$G and beyond networks, namely,  increasing  spectral  efficiency and  connection  density for the  same  training overhead.

	\section{Appendix}
	\textbf{Appendix A proof of Theorem 1:}\label{ratio_coarse}\\     
		In order to derive a tight bound on the performance of the proposed algorithm in table $I$,  we consider the worst case behavior of the greedy  heuristic as in \cite{subopt}. However  in our case, the  problem  is more  sophisticated  since  it  includes the  combination  of the  greedy  heuristic with  an  approximation  algorithm.
		
			The optimization problem~\eqref{eq:problem_coarse} can be decomposed into $N_c$ independent problems, each of which is defined in a given cell $b=1,\ldots,N_c$. We start by  defining $C^{[b]}_k,k=1,\ldots,\tau, b=1,\ldots,N_c$ as the maximum coverage at iteration $k$ of the approximate algorithm in table $I$. We define  $C^{[b]}_{k_{opt}},k=1,\ldots,\tau, b=1,\ldots,N_c$ as the optimal maximum coverage that can be obtained at iteration $k$.
		We also consider $C^{[b]}_{opt}$ as the optimal solution the problem of~\eqref{eq:problem_coarse} defined in each cell $b=1,\ldots,N_c$:
		\begin{align}\label{eq:problem_coarse_percell}
		V(C^{[b]}_{opt})=\underset{Y}{\text{max}} & \sum_{k=1}^{\tau}   \sum_{s=1}^{M}  y^{[k]}_{s,b}\\\tag{15a}
		\text{s.t} \;  & \sum_{i}^{}   x^{[k]}_{\{i,b\}}  \leq U^{[k]}_{b} \;\; \forall k=1...\tau\\\tag{15b}
		& \sum_{i, f_s \in F_{ib} }^{} x^{[k]}_{\{i,b\}}  \geq y^{[k]}_{s,b}  \;\; \forall k=1...\tau, \nonumber
		\end{align}
		where $	V(C^{[b]}_{opt})$ represents the value of  the  coverage $C^{[b]}_{opt}, \; b=1,\ldots,N_c$. 
		The objective function in (\ref{eq:problem_coarse_percell}) is modular. Consequently, the following property holds $\forall b=1,\ldots,N_c$:
		
		\begin{align}\label{eq:modular}
		& V( C^{[b]}_{opt} ) \leq   \sum_{k=1}^{t-1} V(C^{[b]}_{k_{opt}})  + \tau V(C^{[b]}_{t_{opt}})   \;\;,\forall t= 1,..., \tau,
		\end{align}
		In order to  derive a bound on the achievable performance of the proposed algorithm, we consider the worst case behavior of the greedy heuristic. Doing so is equivalent to solving the following linear problems $\forall ~b=1,...,N_c$:
		\begin{align}\label{eq:linear}
		P(b)= {\text{min}} & \sum_{k=1}^{j}  \frac{V(C^{[b]}_{k_{opt}})}{V(C^{[b]}_{opt})}\\\tag{16a}
		\text{s.t} \;  & \sum_{k=1}^{t-1}  \frac{V(C^{[b]}_{k_{opt}})}{V(C^{[b]}_{opt})} + \tau \frac{V(C^{[b]}_{t_{opt}})}{V(C^{[b]}_{opt})} \geq 1 \;\; \forall t=1...j \\\nonumber 
		\end{align}
		Where the constraint $(16a)$ is obtained from~\eqref{eq:modular}. Since ~\eqref{eq:linear} is  a linear  problem, we can  solve it  by  considering its dual $\forall b=1,...,N_c$ which can be written as follows
		\begin{align}\label{eq:dual}
		D(b)={\text{max}} & \sum_{t=1}^{j+1} v_t\\\tag{17a}
		\text{s.t} \;  &  \tau v_k +         \sum_{t=k+1}^{j+1} v_t = 1 \;\;, \forall k=1...j\\\nonumber \tag{17b}
		& v_t \geq 0 \;\;\forall t=1...j+1 \\\nonumber
		\end{align}
		We now proceed by solving~\eqref{eq:dual}, and, consequently, by linear programming duality,~\eqref{eq:linear}. 
		Let $\upsilon= 1-v_{j+1} $, where $v_{j+1}$ is defined  in~\eqref{eq:dual}. Consequently, $\;\forall k=1...j$,  we have:
		\begin{align}\label{eq:lambda}
		& \tau v_k +         \sum_{t=k+1}^{j} v_t = \upsilon ~\text{and}~ v_k= \frac{\upsilon-  \sum_{t=k+1}^{j} v_t}{\tau}
		\end{align}
		Then  $v_{k}$ are  calculated iteratively $\forall\; k=1...j$. 	Indeed, 		
		\begin{align}\label{eq:lambda_v}
		  v_j= \frac{\upsilon}{\tau} ,\; v_{j-1}= \frac{\upsilon- v_j}{\tau}= \frac{\upsilon- \frac{\upsilon}{\tau}}{\tau}=\frac{\upsilon}{\tau} (\frac{\tau-1}{\tau}),\ldots, \; v_1= \frac{\upsilon-  \sum_{t=2}^{j} v_t}{\tau} =  \frac{\upsilon}{\tau} (\frac{\tau-1}{\tau})^{j-1}.
		\end{align}		
		Consequently, $v_{k}, \forall\; k=1...j$ are given by
		\begin{align}\label{eq:lambda2}
		&  v_k= \frac{\upsilon}{\tau} (\frac{\tau-1}{\tau})^{j-k}
		\end{align}
		Therefore, finding $D(b), \forall b=1,..., N_c$ is equivalent to the following :
		\begin{align}\label{eq:lambda3}
		D(b)= \underset{ 0 \leq \upsilon \leq 1}{\text{max}} (\sum_{t=1}^{j+1} v_t)      =\underset{ 0 \leq \upsilon \leq 1}{\text{max}} (\upsilon(1-(\frac{\tau-1}{\tau})^{j}))
		\end{align}
		which is achieved for  $\upsilon=1$. 
		It follows that, $ P(b) = 1 - \frac{j}{\tau} (\frac{\tau-1}{\tau}) ^{j}$. Taking $j=\tau$, we obtain: 
		\begin{align}\label{eq:lambda4}
		P(b) = 1 - (\frac{\tau-1}{\tau}) ^{\tau} 
		\end{align}
		Consequently, the obtained solution using a greedy sequential maximum  coverage verifies:
		\begin{align}\label{eq:lambda5}
		&\frac{V(C^{[b]}_{opt})-\sum_{k=1}^{\tau} {V(C^{[b]}_{k_{opt}})}}{V(C^{[b]}_{opt})} \leq (\frac{\tau-1}{\tau}) ^{\tau}  
		\end{align}		
		In order to derive the performance guarantee of the  algorithm in table $I$, the approximation ratio of the used algorithm to  perform maximum  coverage, at each iteration, needs to be accounted for. We use the  following result which is based on the approximation ratio  given in \cite{MCP}. 
		\begin{Lemma}
			For any given cell index $b=1,\ldots,N_c$ and iteration $k=1,\ldots,\tau$, the implemented algorithm  provides a  $(1-\frac{1}{e})$ approximation of the  optimal cover, i.e   	 $C^{[b]}_k \geq  (1-\frac{1}{e}) C^{[b]}_{k_{opt}} ,k=1,\ldots,\tau, b=1,\ldots,N_c   $.
		\end{Lemma}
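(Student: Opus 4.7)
The plan is to recognize that for any fixed cell index $b$ and fixed iteration $k$, the inner loop (lines 3--8 of Algorithm~\ref{agnostic}) is exactly an instance of the classical maximum coverage problem, and then invoke the standard $(1-1/e)$ guarantee of Nemhauser--Wolsey--Fisher for the greedy maximization of a monotone submodular set function subject to a cardinality constraint. Concretely, at each iteration $k$ the set of uncovered beams is reinitialized to $Un = \textbf{F}$, and the algorithm greedily adds users until the cardinality budget $U^{[k]}_b$ is exhausted, always choosing the user whose spatial signature $\textbf{F}_{ib}$ covers the largest number of remaining beams. The objective $V(C^{[b]}_k) = |\bigcup_{i \in C^{[b]}_k} \textbf{F}_{ib}|$ is manifestly a coverage function, so it is monotone non-decreasing and submodular in the chosen user set.

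To obtain the ratio directly, I would run the standard inductive argument. Let $V_{opt} = V(C^{[b]}_{k_{opt}})$ denote the optimal coverage value for iteration $k$ in cell $b$, and let $x_j$ denote the coverage value achieved after the greedy has selected $j$ users. Since the optimum covers $V_{opt}$ beams using at most $U^{[k]}_b$ users, an averaging argument guarantees that at least one user in the optimal solution contributes $\geq (V_{opt} - x_j)/U^{[k]}_b$ marginal beams relative to the current greedy state. Because the greedy picks the best candidate among \emph{all} users, we obtain the recurrence
\begin{equation*}
x_{j+1} - x_j \;\geq\; \frac{V_{opt} - x_j}{U^{[k]}_b},
\end{equation*}
which rearranges to $V_{opt} - x_{j+1} \leq (1 - 1/U^{[k]}_b)(V_{opt} - x_j)$. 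Iterating from $x_0 = 0$ gives $V_{opt} - x_{U^{[k]}_b} \leq V_{opt}\,(1 - 1/U^{[k]}_b)^{U^{[k]}_b}$, and the elementary bound $(1-1/U^{[k]}_b)^{U^{[k]}_b} \leq e^{-1}$ yields $V(C^{[b]}_k) = x_{U^{[k]}_b} \geq (1 - 1/e)\,V(C^{[b]}_{k_{opt}})$, which is the desired claim.

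The only step that requires care is the averaging step: since the signatures $\textbf{F}_{ib}$ can overlap, one must confirm that submodularity of the coverage function is what validates the claim that some optimal user contributes at least the \emph{average} marginal gain after a partial greedy execution. This is exactly where monotonicity plus submodularity enter; the greedy selection at step $j+1$ dominates every candidate, in particular every optimal one, so the inequality propagates. I expect this conceptual check to be the only non-routine aspect, since the remaining algebra is the standard unrolling of the geometric recursion. Once it is in place, the lemma follows uniformly in $(b,k)$ and can be plugged into the outer bound $P(b) = 1 - ((\tau-1)/\tau)^\tau$ derived earlier in the appendix to complete the proof of Theorem~\ref{spatial_multiplexing:theor:ratio_coarse}.
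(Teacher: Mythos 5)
Your proof is correct and matches the paper's intent: the paper simply invokes the classical greedy maximum-coverage guarantee from its reference on approximation algorithms without writing it out, and your argument is precisely the standard Nemhauser--Wolsey--Fisher derivation of that $(1-1/e)$ bound, correctly applied to the inner loop of the power-agnostic algorithm for each fixed $(b,k)$. The averaging step you flag is handled exactly as you describe, so nothing further is needed.
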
  
		Since at each iteration of the algorithm, in table $I$,  we obtain a $(1-\frac{1}{e})$ approximation of the optimal maximum  coverage, we obtain the following for $ b=1,\ldots,N_c   $
			\begin{align}\label{eq:lambda6}
			&\sum_{k=1}^{\tau} {V(C^{[b]}_{k})} \geq (1-\frac{1}{e}) \sum_{k=1}^{\tau} {V(C^{[b]}_{k_{opt}})}\\
			&\sum_{k=1}^{\tau} {V(C^{[b]}_{k})} \geq (1-\frac{1}{e})(1-(\frac{\tau-1}{\tau}) ^{\tau} )V(C^{[b]}_{opt}) \nonumber
			\end{align}
		we can deduce that 	the algorithm in table $I$  provides a $(1-(\frac{\tau-1}{\tau}) ^{\tau} )(1-\frac{1}{e})$-approximation for each subproblem of~\eqref{eq:problem_coarse}. Taking the sum over $b=1,...,N_c$ finishes the proof.\\
		\textbf{Appendix B proof of Theorem 2:} \label{ratio_knowledge}\\     
		The proof for the performance bound of the  proposed  algorithm in table $II$ follows the same reasoning as the proof of Theorem $1$. The main idea is also to consider the worst case behavior of the greedy  heuristic with a change in the achievable approximation ratio at each iteration. 
		The optimization problem~\eqref{eq:problem_knowledge} can be decomposed into $N_c$ independent problems, each of which is defined in a given cell $b=1,\ldots,N_c$. We start by  defining $C^{[b]}_k,k=1,\ldots,\tau, b=1,\ldots,N_c$ as the maximum coverage at iteration $k$ of the algorithm in table $II$. We define  $C^{[b]}_{k_{opt}},k=1,\ldots,\tau, b=1,\ldots,N_c$ as the optimal maximum coverage that can be obtained at iteration $k$.
		We also consider $C^{[b]}_{opt}$ as the optimal solution the problem of~\eqref{eq:problem_knowledge} defined in each cell $b=1,\ldots,N_c$:
			\begin{align}\label{eq:problem_knowledge_percell}
		V(C^{[b]}_{opt})=	\underset{Y}{\text{max}} & \sum_{k=1}^{\tau} \sum_{i\in \Gamma(b)}^{}  \sum_{f_s\in \textbf{F}}^{}	\zeta^{[s]}_{ib} y^{[s,k]}_{\{i,b\}}     \\\tag{25a}
			\text{subject to} \;  & \sum_{i\in \Gamma(b), f_s \in F_{ib}}^{} y^{[s,k]}_{\{i,b\}}   \leq 1 \;\; \forall k=1...\tau,\\\tag{25b}
			& \sum_{i\in \Gamma(b), f_s \in F_{ib} }^{} x^{[k]}_{\{i,b\}}    \geq     y^{[s,k]}_{\{i,b\}}   \;\; \forall k=1...\tau\\\tag{25c}
			& \sum_{i\in \Gamma(b)}^{} x^{[k]}_{\{i,b\}}   \leq U^{[k]}_{b}   \;\; \forall k=1...\tau,\nonumber
			\end{align}

		The objective function in (\ref{eq:problem_knowledge_percell}) is modular. Consequently, the following property holds $\forall b=1,\ldots,N_c$:
		\begin{align}\label{eq:modularbis}
		& V( C^{[b]}_{opt} ) \leq   \sum_{k=1}^{t-1} V(C^{[b]}_{k_{opt}})  + \tau V(C^{[b]}_{t_{opt}})   \;\;,\forall t= 1,..., \tau,
		\end{align}
		In order to  derive a bound on the achievable performance of the proposed algorithm, we consider the worst case behavior of the greedy heuristic. Doing so is equivalent to solving the following linear problems $\forall ~b=1,...,N_c$:
		\begin{align}\label{eq:linear_knwoledge}
		P(b)= {\text{min}} & \sum_{k=1}^{j}  \frac{V(C^{[b]}_{k_{opt}})}{V(C^{[b]}_{opt})}\\\tag{27a}
		\text{s.t} \;  & \sum_{k=1}^{t-1}  \frac{V(C^{[b]}_{k_{opt}})}{V(C^{[b]}_{opt})} + \tau \frac{V(C^{[b]}_{t_{opt}})}{V(C^{[b]}_{opt})} \geq 1 \;\; \forall t=1...j \nonumber 
		\end{align}
		Where the constraint $(27a)$ is obtained from~\eqref{eq:modularbis}. Here also (\ref{eq:linear_knwoledge})  is solved using its dual. It follows that, $ P(b) = 1 - \frac{j}{\tau} (\frac{\tau-1}{\tau}) ^{j}$. Taking $j=\tau$, we obtain: 
		\begin{align}\label{eq:lambda4bis}
		P(b) = 1 - (\frac{\tau-1}{\tau}) ^{\tau}, \forall b=1,\ldots,N_c.
		\end{align}
		Consequently, the obtained solution using a greedy sequential maximum  coverage verifies:
		\begin{align}\label{eq:lambda5bis}
		&\frac{V(C^{[b]}_{opt})-\sum_{k=1}^{\tau} {V(C^{[b]}_{k_{opt}})}}{V(C^{[b]}_{opt})} \leq (\frac{\tau-1}{\tau}) ^{\tau}  
		\end{align}
		In order to derive the performance guarantee of the proposed algorithm  in table $II$, the approximation ratio of the used algorithm , at each iteration, needs to be considered. We use the  following result which is based on the approximation ratio  given in \cite{GMC}. 
		\begin{Lemma}
			For any given cell index $b=1,\ldots,N_c$ and iteration $k=1,\ldots,\tau$, using  an algorithm of approximation  ratio $\beta $ at step $3$ of the  algorithm in table $III$, the implemented algorithm  provides a  $  \frac{1+\beta-\beta e^{-\frac{1}{\beta}}}{1-e^{-\frac{1}{\beta}}} $ approximation of the  optimal cover, i.e   	 $C^{[b]}_k \geq  \frac{1+\beta-\beta e^{-\frac{1}{\beta}}}{1-e^{-\frac{1}{\beta}}} C^{[b]}_{k_{opt}} ,k=1,\ldots,\tau, b=1,\ldots,N_c   $.
		\end{Lemma}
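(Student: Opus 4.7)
The plan is to adapt the analysis of the generalized maximum coverage algorithm in \cite{GMC} to the setting of Table~\ref{greedy}, in which the dynamic programming for the knapsack subproblem is replaced by a $\beta$-approximation. Fix a cell $b$ and an iteration $k$ and let $OPT$ denote an optimal GMC allocation with value $V(C^{[b]}_{k_{opt}})$. Let $A_0, A_1, \ldots, A_J$ be the successive allocations produced by the while loop of Table~\ref{greedy}, set $A^g = A_J$, and let $A^s$ be the best single-user allocation built in step 4 of Table~\ref{aware}; then $C^{[b]}_k$ is the better of $A^g$ and $A^s$. The residual value in~\eqref{eq:residual_def} is the key device: it turns the assignment-rewriting operator $\oplus$ into a modular addition, so that $V(A \oplus ([i,b],\textbf{F}))$ equals $V(A)$ plus the sum of positive residuals of the newly selected beams.

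I would then establish a per-iteration recurrence. At iteration $j+1$, step 3 returns, by the $\beta$-approximation hypothesis, a (user, beam-subset) pair with residual density at least $\beta$ times the best feasible density under the remaining budget $U^{[k]}_{b} - W(A_j)$. Since the restriction of $OPT$ to items light enough to fit in that budget is itself feasible for the residual problem, the best feasible density is lower bounded by $(V(OPT) - V(A_j) - \delta_j)/U^{[k]}_{b}$, where $\delta_j$ collects the value of $OPT$ items too heavy for the residual budget. This yields
\begin{equation*}
V(OPT) - V(A_{j+1}) \;\leq\; \Bigl(1 - \tfrac{\beta\,\Delta w_{j+1}}{U^{[k]}_{b}}\Bigr)\bigl(V(OPT) - V(A_j)\bigr) + \tfrac{\beta\,\Delta w_{j+1}}{U^{[k]}_{b}}\,\delta_j ,
\end{equation*}
with $\Delta w_{j+1} = W(A_{j+1}) - W(A_j)$. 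Telescoping and invoking $\prod_j (1 - x_j) \leq \exp(-\sum_j x_j)$ together with $\sum_j \Delta w_j \leq U^{[k]}_{b}$ produces a bound of the shape $V(A^g) \geq (1-e^{-1/\beta}) V(OPT) - \beta(1-e^{-1/\beta}) \max_j \delta_j$.

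Finally, I would close the argument using the single-user fallback. Any heavy $OPT$ item contributing to $\max_j \delta_j$ is itself a valid candidate for $A^s$, so $\max_j \delta_j \leq V(A^s)$. This gives $V(A^g) + \beta(1-e^{-1/\beta}) V(A^s) \geq (1-e^{-1/\beta}) V(OPT)$, and dividing through by $1+\beta - \beta e^{-1/\beta}$ shows that $\max(V(A^g), V(A^s)) \geq \frac{1-e^{-1/\beta}}{1+\beta - \beta e^{-1/\beta}} V(OPT)$, which is exactly the content of the stated ratio since steps 6--9 of Table~\ref{aware} output this maximum. The main obstacle will be the careful bookkeeping of residual values when $OPT$ and greedy disagree on the assignment of a shared beam: because $\oplus$ may overwrite earlier assignments, the effective marginal gain is a difference of $\zeta$-values rather than the full $\zeta^{[s]}_{i,b}$. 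Ensuring that the $\beta$-approximate density oracle is applied to a knapsack instance whose optimum still dominates the residual GMC optimum is the delicate step, and it is for this reason that the strict positivity enforced in step 6 of Table~\ref{greedy} is essential.
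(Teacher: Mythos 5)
You should first note that the paper does not actually prove this lemma: it is imported from the GMC reference \cite{GMC} with the single sentence ``We use the following result which is based on the approximation ratio given in \cite{GMC}.'' So your reconstruction of the Cohen--Katzir analysis is doing strictly more work than the paper, and the skeleton you describe --- a per-iteration residual-density recurrence, telescoping via $\prod_j(1-x_j)\le \exp(-\sum_j x_j)$, and a final combination with the best single-user allocation $A^s$ --- is indeed the right architecture for results of this type. Your reading of the statement in the $\mathrm{OPT}/\mathrm{ALG}$ convention (so that the conclusion is $C^{[b]}_k \ge \rho^{-1}C^{[b]}_{k_{opt}}$ with $\rho=\frac{1+\beta-\beta e^{-1/\beta}}{1-e^{-1/\beta}}$) is also the only sensible one, since $\rho>1$ makes the inequality as literally written impossible.

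There is, however, a genuine inconsistency at the heart of your argument. You posit that step 3 returns a pair with residual density at least $\beta$ times the best feasible density (with $\beta\le 1$, e.g.\ $\beta=\tfrac12$ for the greedy knapsack routine), and you correctly derive a recurrence with contraction factor $\bigl(1-\tfrac{\beta\,\Delta w_{j+1}}{U^{[k]}_{b}}\bigr)$. Telescoping that recurrence with $\sum_j \Delta w_j \le U^{[k]}_{b}$ yields $\prod_j\bigl(1-\tfrac{\beta\Delta w_j}{U^{[k]}_{b}}\bigr)\le \exp\bigl(-\beta\sum_j\Delta w_j/U^{[k]}_{b}\bigr)$, which is at best $e^{-\beta}$ --- not $e^{-1/\beta}$ as you then claim. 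The two agree only at $\beta=1$; for $\beta=\tfrac12$ they give $1-e^{-1/2}\approx 0.39$ versus $1-e^{-2}\approx 0.86$, so the bound you announce is strictly stronger than what your own recurrence can deliver (a weaker oracle cannot produce a better guarantee). The exponent $e^{-1/\beta}$ belongs to the opposite convention, in which the oracle returns density at least $(\text{best density})/\beta$ with $\beta\ge 1$; under that convention the $\tfrac12$-approximate greedy knapsack corresponds to $\beta=2$, not $\beta=\tfrac12$. Either the recurrence or the final formula must change, and as written they cannot be reconciled. A secondary, smaller gap: you let $\delta_j$ ``collect'' the value of all OPT items too heavy for the residual budget and then bound $\max_j\delta_j$ by $V(A^s)$, but $V(A^s)$ only dominates the value of a single such item; you need the standard single-witness argument (only the first rejected OPT item matters) for that step to close.
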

		In this  work, we use the greedy  algorithm for  knapsack problems in step  $3$ of the  algorithm in table  $III$. Consequently, in our case, $\beta=\frac{1}{2} $ and  the approximation  ratio, at each iteration,  becomes $  \frac{\frac{3}{2}- \frac{e^{-2}}{2}}{1- e^{-2}}  $.\\	
		Since at each iteration of the algorithm, in table $II$,  we obtain a $  \frac{\frac{3}{2}- \frac{e^{-2}}{2}}{1- e^{-2}}  $ approximation of the optimal maximum  coverage, we obtain the following for $ b=1,\ldots,N_c   $
		\begin{align}\label{eq:lambda6bis}
		&\sum_{k=1}^{\tau} {V(C^{[b]}_{k})} \geq  \frac{\frac{3}{2}- \frac{e^{-2}}{2}}{1- e^{-2}}  \sum_{k=1}^{\tau} {V(C^{[b]}_{k_{opt}})}\\
		&\sum_{k=1}^{\tau} {V(C^{[b]}_{k})} \geq   \frac{\frac{3}{2}-\frac{e^{-2}}{2}}{1- e^{-2}}  (1-(\frac{\tau-1}{\tau}) ^{\tau} )V(C^{[b]}_{opt})\nonumber
		\end{align}
		we can deduce that 	the algorithm in table $II$  provides a $(1-(\frac{\tau-1}{\tau}) ^{\tau} ) \frac{\frac{3}{2}- \frac{e^{-2}}{2}}{1- e^{-2}}  $-approximation for each subproblem of~\eqref{eq:problem_knowledge}. Taking the sum over $b=1,...,N_c$ finishes the proof.

% that's all folks
\end{document}